\documentclass[a4paper,12pt]{article}

\usepackage{amsmath,amssymb,amsthm}
\usepackage[top=2cm,nohead,left=2.5cm,right=2.5cm]{geometry}
\usepackage{setspace}
\usepackage{natbib}
\usepackage{hyperref}
\hypersetup{hidelinks}

\newtheorem{theorem}{Theorem}
\newtheorem{lemma}{Lemma}
\newtheorem{proposition}{Proposition}
\newtheorem{corollary}{Corollary}
\newtheorem{assumption}{Assumption}
\theoremstyle{definition}
\newtheorem{definition}{Definition}

\theoremstyle{remark}

\begin{document}

\title{Doctor-Optimal Stability in Unitary Many-to-Many Markets}
\author{Yi-You Yang\thanks{Department of Applied Mathematics, Chung Yuan Christian University, Taoyuan City, Taiwan. E-mail address: yyyang@cycu.edu.tw}}
\maketitle

\begin{abstract}
We study bilaterally unitary many-to-many doctor--hospital matching with
contracts, taking choice functions as primitives. Doctor choices are
substitutable and satisfy irrelevance of rejected contracts, while hospital
choices are unilaterally substitutable and satisfy the same condition. Every
trajectory of the doctor-proposing cumulative offer process terminates at the
greatest stable allocation under the doctor Blair order. We also introduce
weakly hospital-quasi-stable allocations and show that they form a finite
lattice whose greatest element is stable. Hence, the cumulative-offer outcome,
the greatest weakly hospital-quasi-stable allocation, and the doctor-optimal
stable allocation coincide. The common allocation is hospital-pessimal in the
revealed-choice sense. Under the law of aggregate demand, every agent signs
the same number of contracts at all stable allocations.
\end{abstract}

\section{Introduction}
\label{sec:introduction}

Matching with contracts allows agents to choose both their partners and the
terms governing their relationships \citep{Hatfield2005a}. We study
bilaterally unitary many-to-many doctor--hospital markets. Each doctor and
hospital may maintain several relationships, but a doctor--hospital pair may
sign at most one contract. Contracts involving the same pair can therefore be
viewed as mutually exclusive specifications of a single bilateral
relationship. The model permits portfolio choice across partners while
separating partner choice from the choice of contractual terms within a
relationship.

Our question is whether unilateral substitutability guarantees a
doctor-optimal stable allocation in this domain. Taking choice functions as
primitives, a stable allocation \(Y^D\) is doctor-optimal if
\[
C_D(Y^D\cup Y)=Y^D
\qquad
\text{for every stable allocation }Y.
\]
Thus, \(Y^D\) is the greatest stable allocation under the doctor Blair order.
When doctor choice functions are rationalizable by strict preferences over
bundles, this condition is equivalent to every doctor weakly preferring her
bundle at \(Y^D\) to her bundle at any other stable allocation.

\citet{Hatfield2010} establish doctor-optimal stability under unilateral
substitutability in the standard many-to-one model, where doctors are unit
demand. Their conclusion does not extend to general nonunitary many-to-many
markets. \citet{Bando2026} introduce substitutability across doctors, an
identity-level extension of unilateral substitutability, and show that it
guarantees the existence of a stable outcome when doctor choices are
responsive. Their Example~3 shows, however, that the stronger
doctor-optimal conclusion may fail. The hospital choice function in that
example is unilaterally substitutable and satisfies substitutability across
doctors, and the doctor choices are responsive. The cumulative offer process
has an order-independent stable outcome, but the market has two stable
outcomes and no doctor-optimal stable allocation.

We identify bilateral unitarity as a sufficient boundary for restoring the
doctor-optimal conclusion. Doctors have substitutable and unitary choice
functions satisfying irrelevance of rejected contracts. Hospitals have
unilaterally substitutable and unitary choice functions satisfying the same
consistency condition. Under hospital unitarity, unilateral substitutability
is equivalent to substitutability across doctors \citep{Bando2026}. We show
that these conditions guarantee a greatest stable allocation under the doctor
Blair order.

The comparison with \citet[Example~3]{Bando2026} isolates the role of
unitarity. The failure of doctor-optimality in their example is not caused by
portfolio choice across several partners: our model retains such portfolio
choice on both sides. The difference is that their nonunitary environment
allows several contracts to coexist within the same bilateral relationship.
Once alternative contractual specifications of a relationship are required
to be mutually exclusive, identity-level substitutability again supports a
doctor-optimal stable selection. Bilateral unitarity is therefore a
substantive domain restriction rather than a normalization of the model.

Our first characterization is procedural. We study a doctor-proposing
cumulative offer process in which one eligible contract is proposed at each
step and hospitals choose from the offers accumulated so far. Unilateral
substitutability does not imply monotonicity of hospital rejections on
arbitrary menus: a hospital may replace one contract with a doctor by another
contract involving the same doctor. We nevertheless show that hospital
rejections are irreversible along every cumulative-offer trajectory.
Hospital unitarity converts unilateral substitutability into preservation of
doctor identities, while doctor unitarity prevents two alternative contracts
within the same relationship from being simultaneously chosen. These
properties rule out a reversal of an earlier rejection.

It follows that every cumulative-offer trajectory terminates at a stable
allocation. More strongly, its terminal allocation \(Y^{COP}\) satisfies
\[
C_D(Y^{COP}\cup Y)=Y^{COP}
\qquad
\text{for every stable allocation }Y.
\]
Hence, \(Y^{COP}\) is the greatest stable allocation under the doctor Blair
order. All cumulative-offer trajectories therefore have the same outcome.
This strengthens order independence: in the nonunitary example of
\citet{Bando2026}, the process is order-independent but its outcome is not
doctor-optimal; under bilateral unitarity, every trajectory reaches the
greatest stable allocation.

The hospital-side condition is also sharp at the domain level.
\citet{Kasuya2021} show that unilateral substitutability is necessary for
guaranteed doctor-optimal stability within the observably
substitutable-across-doctors domain and is necessary more generally in the
maximal-domain sense. His counterexamples use unit-demand doctors and unitary
hospital choice functions satisfying irrelevance of rejected contracts.
They therefore lie within the choice-function domain considered here.
Combining his necessity result with our sufficiency theorem shows that,
within the unitary hospital domain, the class of unilaterally substitutable
choice functions is inclusion-maximal for a universal guarantee of
doctor-optimal stability. This is a domain-level statement; it does not imply
that every particular market violating unilateral substitutability lacks a
doctor-optimal stable allocation.

The preceding results reveal an asymmetric boundary. Doctor demand can be
expanded from unit demand to substitutable portfolio choice without losing
doctor-optimal stable selection. By contrast, the hospital-side requirement
of substitution across doctor identities cannot generally be weakened. In
this sense, our result connects the many-to-one sufficiency theorem of
\citet{Hatfield2010}, the maximal-domain necessity theorem of
\citet{Kasuya2021}, and the nonunitary failure documented by
\citet[Example~3]{Bando2026}.

Our second characterization is order-theoretic. It builds on
firm-quasi-stability, which relaxes stability by requiring firms to retain
their current contracts when choosing from all contracts acceptable to
workers \citep{Sotomayor1996,Yang2025,Yang2026}. Contract retention is too
strong under unilateral substitutability. A hospital may change the terms
associated with a doctor while retaining the underlying relationship. We
therefore introduce \emph{weak hospital-quasi-stability}, which replaces
contract retention with doctor-identity retention.

A weakly hospital-quasi-stable allocation is doctor-individually rational and
requires each hospital, when choosing from all contracts acceptable to
doctors relative to the current allocation, to retain some contract with
every doctor it currently employs. The hospital need not retain the same
contract. The notion thus describes a relationship-preserving renegotiation
state: contractual terms may change, while existing bilateral relationships
remain protected.

We prove that weakly hospital-quasi-stable allocations form a finite lattice
under the doctor Blair order. Let \(Y^Q\) be its greatest element. Define
\[
\Gamma(Y)
:=
\{x\in X:x\in C_D(Y\cup\{x\})\}
\]
and
\[
T(Y):=C_D(C_H(\Gamma(Y))).
\]
The operator maps the weakly hospital-quasi-stable domain into itself and
satisfies
\[
T(Y)\succeq_D^B Y.
\]
No isotonicity of \(T\) is required. We show that the fixed points of \(T\)
within the relaxed domain are precisely the stable allocations. Since
\(Y^Q\) is the greatest element of that domain, the improvement property
implies \(T(Y^Q)=Y^Q\), and hence \(Y^Q\) is stable.

The procedural and order-theoretic constructions coincide:
\[
Y^{COP}
=
Y^Q
=
\max_{\succeq_D^B}\mathcal S
=
Y^D.
\]
The cumulative offer process describes how the doctor-optimal stable
allocation is reached. The weakly hospital-quasi-stable lattice explains why
it exists even though the stable set itself need not form a lattice. The
construction identifies the order structure that survives when
contract-level substitutability is replaced by substitution across partner
identities.

The common allocation is also hospital-pessimal in a pairwise
revealed-choice sense. For every stable allocation \(Y\),
\[
C_H(Y^D\cup Y)=Y.
\]
Thus, when a hospital is offered its bundles at \(Y^D\) and \(Y\), it chooses
the latter. When hospital choice functions are rationalizable, this is the
usual hospital-pessimality conclusion. It identifies an ordinal boundary of
the stable set and does not impose a cardinal or zero-sum welfare
interpretation.

Finally, if every agent satisfies the law of aggregate demand, every doctor
and hospital signs the same number of contracts at all stable allocations.
Under bilateral unitarity, these contract counts equal the numbers of active
bilateral relationships. Stable allocations may differ in partners and
contractual terms, but not in the number of relationships maintained by each
agent.

The remainder of the paper is organized as follows. Section~\ref{sec:model}
introduces the model, the choice conditions, and stability.
Section~\ref{sec:COP} studies the cumulative offer process and establishes
doctor-optimality, order independence, and the maximality implication.
Section~\ref{sec:weak-quasi-stability} develops the weakly
hospital-quasi-stable lattice and the improvement operator.
Section~\ref{sec:equivalence} combines the two characterizations and derives
hospital pessimality and the rural hospitals theorem.
Section~\ref{sec:conclusion} concludes.

\section{Model and stability}
\label{sec:model}

\subsection{Environment and choice behavior}

A market consists of finite sets of doctors \(D\), hospitals \(H\), and
contracts \(X\). Each contract \(x\in X\) specifies a doctor \(d(x)\in D\)
and a hospital \(h(x)\in H\). For each agent \(a\in D\cup H\), let
\[
X_a:=\{x\in X:a\in\{d(x),h(x)\}\}.
\]
An allocation is a set \(Y\subseteq X\), and \(Y_a:=Y\cap X_a\). For
\(Z\subseteq X\), write
\[
d(Z):=\{d(x):x\in Z\}.
\]

Each agent \(a\in D\cup H\) is represented by a choice function
\[
C_a:2^{X_a}\longrightarrow 2^{X_a},
\qquad
C_a(S)\subseteq S.
\]
For \(Y\subseteq X\), write \(C_a(Y):=C_a(Y_a)\), and define
\[
C_D(Y):=\bigcup_{d\in D}C_d(Y),
\qquad
C_H(Y):=\bigcup_{h\in H}C_h(Y).
\]

A choice function \(C_a\) satisfies \emph{irrelevance of rejected contracts}
(IRC) if, for all \(Y\subseteq Z\subseteq X_a\),
\[
C_a(Z)\subseteq Y\subseteq Z
\quad\Longrightarrow\quad
C_a(Y)=C_a(Z).
\]
IRC is the standard consistency condition emphasized by
\citet{Aygun2013c}. In particular, it implies idempotence:
\[
C_a(C_a(Y))=C_a(Y).
\]

Following \citet{Kelso1982a}, a choice function \(C_a\) is
\emph{substitutable} if, for all \(Y\subseteq Z\subseteq X_a\),
\[
C_a(Z)\cap Y\subseteq C_a(Y).
\]
This is the standard contract-level substitutability condition used in
matching with contracts \citep{Hatfield2005a}.

A choice function is \emph{unitary} if no choice set contains two distinct
contracts involving the same doctor--hospital pair. Unitarity is weaker than
unit demand: it permits an agent to maintain several bilateral relationships
while treating alternative contractual specifications of a given
relationship as mutually exclusive.

Following \citet{Hatfield2010}, a hospital choice function \(C_h\) is
\emph{unilaterally substitutable} if, for all \(x,z\in X_h\) and
\(Y\subseteq X_h\) such that \(d(z)\notin d(Y)\),
\[
z\in C_h(Y\cup\{x,z\})
\quad\Longrightarrow\quad
z\in C_h(Y\cup\{z\}).
\]

\begin{assumption}
\label{ass:main}
Every doctor choice function is substitutable, unitary, and satisfies IRC.
Every hospital choice function is unilaterally substitutable, unitary, and
satisfies IRC.
\end{assumption}

Assumption~\ref{ass:main} separates partner choice from contract-term choice.
A hospital may revise the terms associated with a doctor, but complementarities
across doctor identities are restricted. Thus, the assumption permits
contract-level flexibility within a bilateral relationship without imposing
full substitutability across individual contracts.

Bilateral unitarity is substantive rather than merely technical.
\citet[Example~3]{Bando2026} construct a nonunitary many-to-many market in
which hospital choice satisfies unilateral substitutability and
substitutability across doctors, and the cumulative offer process has an
order-independent stable outcome, but no doctor-optimal stable allocation
exists. Our domain retains portfolio choice across partners while excluding
multiple simultaneous contracts within a given bilateral relationship.

Doctor substitutability and IRC imply path independence:
\[
C_d(Y\cup Z)=C_d(C_d(Y)\cup Z)
\qquad
\text{for all }d\in D\text{ and }Y,Z\subseteq X.
\]
We use this property repeatedly. Hospital choice functions need not be
substitutable or path independent.

For nonunitary many-to-many markets, \citet{Bando2026} introduce
\emph{substitutability across doctors} as an identity-level extension of
unilateral substitutability. A hospital choice function \(C_h\) satisfies
substitutability across doctors if, for all
\(Y\subseteq Z\subseteq X_h\) and \(d\in D\),
\[
\bigl|(C_h(Z)\cap Y)_d\bigr|
\leq
\bigl|C_h(Y)_d\bigr|.
\]
The condition requires a hospital that chooses contracts with a doctor from
a larger menu to continue choosing at least as many contracts with that
doctor when the menu shrinks and the relevant contracts remain available.

The following equivalence is due to \citet{Bando2026}. Under unitarity, the
two cardinalities above are either zero or one, so substitutability across
doctors reduces to preservation of doctor identities.

\begin{proposition}[Unilateral substitutability and substitutability across
doctors]
\label{prop:US-SAD}
Suppose that \(C_h\) is unitary. Then \(C_h\) is unilaterally substitutable
if and only if it satisfies substitutability across doctors.
\end{proposition}

For completeness, Appendix~\ref{app:US-SAD} provides a proof in the notation
of the present paper.
 
\subsection{Stability and revealed optimality}

An allocation \(Y\) is \emph{individually rational} if
\[
C_D(Y)=C_H(Y)=Y.
\]
A nonempty set \(Z\subseteq X\setminus Y\) \emph{blocks} \(Y\) if
\[
Z\subseteq C_D(Y\cup Z)\cap C_H(Y\cup Z).
\]
An allocation is \emph{stable} if it is individually rational and admits no
blocking set. Let \(\mathcal S\) denote the set of stable allocations.

For doctor-individually rational allocations \(Y\) and \(Y'\), write
\[
Y\succeq_D^B Y'
\quad\Longleftrightarrow\quad
C_D(Y\cup Y')=Y.
\]
Path independence makes \(\succeq_D^B\) a partial order, the doctor Blair
order, on the doctor-individually rational allocations \citep{Blair1988}.

A stable allocation \(Y^D\) is \emph{doctor-optimal} if
\[
Y^D\succeq_D^B Y
\qquad
\text{for every }Y\in\mathcal S.
\]
Equivalently,
\[
C_D(Y^D\cup Y)=Y^D
\qquad
\text{for every }Y\in\mathcal S.
\]
Thus, a doctor-optimal stable allocation is the greatest element of
\((\mathcal S,\succeq_D^B)\) and is unique whenever it exists. If doctor
choices are induced by strict preferences over bundles, this definition
coincides with the usual welfare notion of doctor optimality.

A stable allocation \(Y^P\) is \emph{hospital-pessimal} if
\[
C_H(Y^P\cup Y)=Y
\qquad
\text{for every }Y\in\mathcal S.
\]
This pairwise definition does not require hospital choices to generate a
partial order. When hospital choices are induced by strict preferences over
bundles, it coincides with the usual welfare notion of hospital pessimality.

For an allocation \(Y\), define
\[
\Gamma(Y):=\{x\in X:x\in C_D(Y\cup\{x\})\}.
\]
If \(C_D(Y)=Y\), then \(Y\subseteq\Gamma(Y)\). The following
characterization is used throughout the paper.

\begin{proposition}[Stability characterization]
\label{prop:stability-characterization}
Suppose that doctor choice functions are substitutable, hospital choice
functions are unitary, and all choice functions satisfy IRC. An allocation
\(Y\) is stable if and only if
\[
Y=C_H(\Gamma(Y)).
\]
\end{proposition}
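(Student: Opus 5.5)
The plan is to prove the two directions separately. The ``if'' direction uses only the definitions, doctor substitutability and IRC. The ``only if'' direction is where I expect hospital unitarity to matter.

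\emph{Sufficiency.} Suppose \(Y=C_H(\Gamma(Y))\). First I would record that \(Y\subseteq\Gamma(Y)\), since \(C_H\) selects from its argument. Applying hospital IRC to \(Y\subseteq\Gamma(Y)\), whose chosen set \(Y\) lies inside \(Y\), gives \(C_H(Y)=Y\). For doctors, each \(x\in Y\) lies in \(\Gamma(Y)\), so \(x\in C_D(Y\cup\{x\})=C_D(Y)\). Hence \(Y\subseteq C_D(Y)\), and therefore \(C_D(Y)=Y\).

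Now suppose a nonempty \(Z\subseteq X\setminus Y\) blocks \(Y\). For each \(z\in Z\) we have \(z\in C_D(Y\cup Z)\), and doctor substitutability applied to \(Y\cup\{z\}\subseteq Y\cup Z\) gives \(z\in C_D(Y\cup\{z\})\). So \(Z\subseteq\Gamma(Y)\), and hence \(C_H(\Gamma(Y))=Y\subseteq Y\cup Z\subseteq\Gamma(Y)\). Hospital IRC then yields \(C_H(Y\cup Z)=Y\), which is disjoint from \(Z\). This contradicts \(Z\subseteq C_H(Y\cup Z)\), so \(Y\) is stable.

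\emph{Necessity.} Let \(Y\) be stable and put \(Z:=C_H(\Gamma(Y))\), with \(Y\subseteq\Gamma(Y)\). If \(Z\subseteq Y\), then IRC on \(Z\subseteq Y\subseteq\Gamma(Y)\) gives \(C_H(Y)=Z\). Since \(C_H(Y)=Y\), this forces \(Z=Y\).

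So it remains to rule out \(B:=Z\setminus Y\neq\emptyset\). IRC on \(Z\subseteq Y\cup B\subseteq\Gamma(Y)\) gives \(C_H(Y\cup B)=Z\supseteq B\), which settles the hospital side of blocking. On the doctor side, set \(W:=C_D(Y\cup B)\setminus Y\subseteq B\). The set \(W\) is nonempty. Otherwise, for some \(x\in B\) with doctor \(d=d(x)\), IRC applied to \(Y\cup\{x\}\subseteq Y\cup B\) would give \(C_d(Y\cup\{x\})\subseteq Y\), contradicting \(x\in\Gamma(Y)\). Moreover, IRC on \(C_D(Y\cup B)\subseteq Y\cup W\subseteq Y\cup B\) gives \(W\subseteq C_D(Y\cup W)\).

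Hence if \(W=B\), the set \(B\) blocks \(Y\), contradicting stability.

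\emph{Main obstacle.} The hard case is \(W\subsetneq B\), where the hospital side must be re-established on the smaller menu \(Y\cup W\). Hospital choice is not assumed substitutable, so \(W\subseteq C_H(Y\cup W)\) is not automatic.

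My first attempt would be an iteration. Replace \(\Gamma(Y)\) by \(Y\cup W\), recompute \(Z'=C_H(Y\cup W)\), \(B'=Z'\setminus Y\) and \(W'\), and continue. By finiteness the sets shrink until either some \(W^{(k)}=B^{(k)}\), which produces a blocking set, or \(C_H(Y\cup W^{(k)})=Y\).

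The second terminal case is where unitarity has to be used. I would exploit that \(Z=C_H(Y\cup B)\) is unitary and \(C_D(Y\cup B)\) is chosen from \(Y\cup B\). Concretely, I would compare, relationship by relationship, the contracts in \(Y\) that \(Z\) drops with the contracts in \(B\setminus W\) that doctors reject. The aim is to show that \(C_H(Y\cup B)\subseteq Y\cup W\) already holds, so that IRC gives \(C_H(Y\cup W)=Z\supseteq W\). If this comparison does not go through, I would fall back on selecting a minimal blocking candidate among the iterates.
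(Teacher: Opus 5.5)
Your sufficiency argument is correct and is the paper's argument. Your reduction of necessity to ruling out \(B:=C_H(\Gamma(Y))\setminus Y\neq\emptyset\) is also fine, including \(C_H(Y\cup B)=C_H(\Gamma(Y))\supseteq B\) and the case \(W=B\). The gap is the case \(W\subsetneq B\), which you leave open.

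The route you sketch for that case cannot succeed. Since \(B\subseteq C_H(Y\cup B)\) and \(B\cap Y=\emptyset\), the inclusion \(C_H(Y\cup B)\subseteq Y\cup W\) you aim for is equivalent to \(B\subseteq W\), that is, to \(W=B\). In the hard case it is false by definition. The iteration does not close the gap either. It may stop with \(C_H(Y\cup W^{(k)})=Y\), and nothing in your argument turns that into a contradiction. Without hospital substitutability, a hospital may drop its remaining new contracts once doctors have rejected some of the others. The ``minimal blocking candidate'' fallback is not an argument.

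The missing idea is to test the new contracts of one hospital at a time, which is how the paper proceeds. Fix \(h\) and set \(B_h:=B\cap X_h=C_h(\Gamma(Y))\setminus Y\).

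\emph{Hospital side.} Since \(C_h(\Gamma(Y))\subseteq Y_h\cup B_h\subseteq\Gamma(Y)_h\), IRC gives \(C_h(Y\cup B_h)=C_h(\Gamma(Y))\supseteq B_h\).

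\emph{Doctor side.} The set \(B_h\) lies in the choice set \(C_h(\Gamma(Y))\). Hospital unitarity therefore implies that each \(z\in B_h\) is the only contract of \(d(z)\) in \(B_h\). Hence \((Y\cup B_h)_{d(z)}=Y_{d(z)}\cup\{z\}\), and \(z\in\Gamma(Y)\) gives \(z\in C_{d(z)}(Y\cup B_h)\) directly.

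Thus \(B_h\) blocks \(Y\) whenever it is nonempty, so \(C_H(\Gamma(Y))\subseteq Y\), and your IRC step finishes the proof. Your shrinkage \(W\subsetneq B\) arises only because one doctor may hold contracts in \(B\) with several hospitals and cannot accept all of them. Restricting to a single hospital removes this possibility. That restriction, not a relationship-by-relationship comparison after the fact, is where hospital unitarity is used.
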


The proof is in Appendix~\ref{app:stability-characterization}.

Unless stated otherwise, all subsequent results are established under
Assumption~\ref{ass:main}. The law of aggregate demand is imposed only in
Corollary~\ref{cor:rural-hospitals}.

\section{The doctor-proposing cumulative offer process}
\label{sec:COP}

\subsection{The process}

Set
\[
O^0=R^0=Y^0=\emptyset,
\]
where \(O^t\) is the set of contracts proposed during the first \(t\) steps,
\(R^t\) is the cumulative set of rejected contracts, and \(Y^t\) is the
tentative allocation. At state \(t\geq0\), the process terminates if
\[
C_D(X\setminus R^t)\subseteq O^t.
\]
Otherwise, choose a doctor \(d\) and a contract
\[
x_{t+1}\in C_d(X\setminus R^t)\setminus O^t,
\]
and update
\[
O^{t+1}:=O^t\cup\{x_{t+1}\},
\qquad
Y^{t+1}:=C_H(O^{t+1}),
\qquad
R^{t+1}:=R^t\cup R_H(O^{t+1}),
\]
where \(R_H(O):=O\setminus C_H(O)\). Since each contract is proposed at most
once, every trajectory terminates. Let \(t^*\) denote the terminal state.

The principal difficulty is that unilateral substitutability does not make
hospital rejections monotone on arbitrary menus. The cumulative-offer menus
nevertheless have the required monotonicity.

\begin{lemma}[Irrevocability of hospital rejections]
\label{lem:rejection-irrevocability}
Along every trajectory,
\[
R_H(O^{t-1})\subseteq R_H(O^t)
\qquad
\text{for }t=1,\ldots,t^*.
\]
\end{lemma}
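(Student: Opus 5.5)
We argue by induction on \(t\). Fix \(t\), assume \(R_H(O^{s-1})\subseteq R_H(O^s)\) for all \(s<t\), and note that this gives \(R^{s}=R_H(O^{s})\) for \(s\le t-1\): the cumulative rejection set coincides with the current one. Suppose, for contradiction, that some \(z\in R_H(O^{t-1})\) satisfies \(z\in C_h(O^t)\), where \(h=h(z)\) and \(d=d(z)\). If \(h(x_t)\neq h\), then \(O^t_h=O^{t-1}_h\), which is immediate. So assume \(h(x_t)=h\).

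\emph{Step 1 (a rival contract).} Apply substitutability across doctors (Proposition~\ref{prop:US-SAD}) to \(O^{t-1}_h\subseteq O^t_h\) and doctor \(d\). Since \(z\in C_h(O^t)\cap O^{t-1}\), the hospital must choose some contract with \(d\) from \(O^{t-1}_h\). Because \(z\) is rejected there, this is a contract \(z'\neq z\) with the same pair \((d,h)\) and \(z'\in C_h(O^{t-1})\). By the induction hypothesis \(z'\notin R^{t-1}\), so \(z'\) has never been rejected. By hospital unitarity, \(z'\notin C_h(O^t)\).

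\emph{Step 2 (doctor side).} We use doctor unitarity together with substitutability of \(C_d\). If a contract \(u\in C_d(X\setminus R^{q-1})\) has remained unrejected through step \(k-1\), then, since \(X\setminus R^{k-1}\subseteq X\setminus R^{q-1}\), substitutability gives \(u\in C_d(X\setminus R^{k-1})\). Consequently, two contracts of the same pair \((d,h)\) can never both be proposed and simultaneously unrejected. Otherwise, at the proposal step of the later one, both would lie in \(C_d(X\setminus R^{k-1})\), contradicting unitarity of \(C_d\).

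Two consequences follow. First, \(d(x_t)\neq d\): otherwise \(x_t\) and \(z'\) would both lie in \(C_d(X\setminus R^{t-1})\). Second, \(z\) was proposed and then rejected strictly before \(z'\) was proposed at some step \(r\le t-1\).

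\emph{Step 3 (main obstacle).} It remains to exclude this configuration: \(z\) is rejected at every menu \(O^k\) with \(k\ge p\), where \(p<r\); \(z'\) is chosen at \(O^{t-1}\); and adding a contract \(x_t\) with a different doctor makes \(h\) switch from \(z'\) back to \(z\). I would first try unilateral substitutability with \(x=x_t\). The difficulty is that the definition requires \(d(z)\notin d(Y)\), whereas the menu contains \(z'\). To handle this, I would compare \(O^t_h\) with \(O^t_h\setminus\{z'\}\). Since \(z'\notin C_h(O^t)\), IRC gives \(C_h(O^t\setminus\{z'\})=C_h(O^t)\ni z\).

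I would then try to remove \(x_t\) as well, to reach \(z\in C_h(O^{t-1}_h\setminus\{z'\})\), and compare this menu with the earlier menus \(O^{k}_h\) with \(p\le k<r\), in which \(z'\) was absent and \(z\) was rejected. Two routes are available for this comparison: substitutability across doctors applied to doctor \(d\), or a further induction showing that \(C_h(O^{k})\) and \(C_h(O^{t-1}\setminus\{z'\})\) agree on the relevant doctors. The point I expect to need the most care is the removal of \(x_t\) and the other contracts with doctor \(d\) while keeping \(z\) chosen, because this is exactly where unilateral substitutability, rather than full substitutability, must do the work. If this route stalls, I would strengthen the inductive hypothesis to assert, for each pair \((d,h)\), that once a contract of the pair is rejected, no later menu \(O^k\) chooses it whenever another contract of that pair is unrejected.
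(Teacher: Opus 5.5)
\textbf{There is a genuine gap: Step~3 is never carried out.} Your inductive setup, which gives $R^s=R_H(O^s)$ for $s\le t-1$, is correct, and so are Step~1 and the doctor-side observation in Step~2. But Step~3 is only a list of candidate routes and a fallback strengthening, so the lemma is not yet proved. Part of the difficulty comes from your choice of menus. Applying substitutability across doctors to $O^{t-1}\subseteq O^t$ only produces the current holder $z'$, which is consistent with everything before step $t$, so no contradiction can come from it. That is what pushes you toward deleting $x_t$ and $z'$ via unilateral substitutability. You mention comparing with earlier menus via substitutability across doctors, but you never say which menu, what contract this yields, or why that contract is impossible. Also, $z$ need not be rejected at every $O^k$ with $k\ge p$. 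It is rejected only from its first rejection step on, and that step is the one that matters.

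\textbf{The missing step.} Let $q\le t-1$ be the least index with $z\in R_H(O^q)$. Apply substitutability across doctors directly to $O^q\subseteq O^t$; there is no need to remove $x_t$. Since $z\in C_h(O^t)\cap O^q$, some $w\in C_h(O^q)$ has $d(w)=d$, and $w\neq z$. By the induction hypothesis, $w\notin R_H(O^q)=R^q$. By minimality of $q$, $z\notin R^{q-1}$. So both contracts of the pair $(d,h)$ were proposed by step $q$, and neither lies in $R^{q-1}$. Let $k\le q$ be the proposal step of the later one. Then both are chosen by $d$ from $X\setminus R^{k-1}$, which is exactly what your Step~2 excludes. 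This closes the induction in a few lines; neither $z'$ nor the fact $d(x_t)\neq d$ is needed.

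\textbf{Comparison with the paper.} The paper makes the same comparison (first-rejection menu versus a later menu, then doctor substitutability and unitarity), but without induction on $t$. It shows the rival contract was proposed strictly earlier and was already rejected, so it is itself revived, and it concludes by infinite descent on proposal dates. Your inductive framing would let you replace that descent with a direct appeal to the induction hypothesis.
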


The proof is in Appendix~\ref{app:COP-proofs}. If a rejected contract were
later restored, substitutability across doctors would identify an earlier
replacement contract involving the same doctor. Doctor substitutability and
unitarity would then generate a restored contract with a strictly earlier
proposal date. Iteration yields an impossible infinite descent.

The following consequences isolate the remaining technical content of the
argument.

\begin{lemma}[Cumulative rejections]
\label{cor:cumulative-rejections}
Along every trajectory,
\[
R^t=R_H(O^t)
\qquad
\text{for }t=0,\ldots,t^*.
\]
\end{lemma}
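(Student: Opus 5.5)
The plan is an induction on \(t\) that uses Lemma~\ref{lem:rejection-irrevocability} as the only substantive input. For the base case \(t=0\), we have \(O^0=\emptyset\), so \(R_H(O^0)=\emptyset\setminus C_H(\emptyset)=\emptyset=R^0\).

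For the inductive step, suppose \(R^{t-1}=R_H(O^{t-1})\) for some \(1\le t\le t^*\). By the update rule,
\[
R^t=R^{t-1}\cup R_H(O^t)=R_H(O^{t-1})\cup R_H(O^t).
\]
Lemma~\ref{lem:rejection-irrevocability} gives \(R_H(O^{t-1})\subseteq R_H(O^t)\), so the union collapses to \(R_H(O^t)\). This closes the induction.

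There is no real obstacle here: all the difficulty sits in Lemma~\ref{lem:rejection-irrevocability}, which is taken as given. The only points to check are bookkeeping. First, the indexing of the update must match the range \(t=1,\dots,t^*\) in that lemma. Second, \(R_H\) must be applied to the full cumulative menu \(O^t\) and not only to the newly proposed contract, which is exactly how the process is defined. With both confirmed, the identity holds for every \(t=0,\dots,t^*\).

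As a remark useful later, the lemma also yields \(R^t\subseteq O^t\) and shows that the sequence \(R^t\) is monotone.
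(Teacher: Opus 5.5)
Your proof is correct and follows the paper's own argument: it verifies the identity at \(t=0\) and then, assuming it at \(t-1\), collapses \(R^t=R_H(O^{t-1})\cup R_H(O^t)\) to \(R_H(O^t)\) using Lemma~\ref{lem:rejection-irrevocability}.
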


\begin{lemma}[Terminal identities]
\label{lem:COP-terminal-identities}
The terminal state satisfies
\[
Y^{t^*}=C_D(X\setminus R^{t^*})
\qquad\text{and}\qquad
O^{t^*}=\Gamma(Y^{t^*}).
\]
\end{lemma}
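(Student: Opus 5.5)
The plan is to combine the termination condition with Lemma~\ref{cor:cumulative-rejections}, and then use doctor substitutability and IRC contract by contract. Write $O:=O^{t^*}$, $R:=R^{t^*}$ and $Y:=Y^{t^*}=C_H(O)$.

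By Lemma~\ref{cor:cumulative-rejections}, $R=R_H(O)=O\setminus C_H(O)$, so
\[
O\setminus R=Y .
\]
In particular $R\subseteq O$ and $Y\subseteq X\setminus R$. The termination rule gives $C_D(X\setminus R)\subseteq O$. Since also $C_D(X\setminus R)\subseteq X\setminus R$, it follows that
\[
C_D(X\setminus R)\subseteq O\setminus R=Y\subseteq X\setminus R .
\]
Applying IRC doctor by doctor then yields $C_D(Y)=C_D(X\setminus R)$.

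To obtain the first identity, it remains to show $Y\subseteq C_D(X\setminus R)$. Take $y\in Y$ and suppose it was proposed at step $s$. Then $y\in C_{d(y)}(X\setminus R^{s-1})$. Because rejections accumulate, $R^{s-1}\subseteq R$, so $X\setminus R\subseteq X\setminus R^{s-1}$; and $y\notin R$. Doctor substitutability therefore gives
\[
y\in C_{d(y)}(X\setminus R^{s-1})\cap(X\setminus R)\subseteq C_{d(y)}(X\setminus R).
\]
Hence $Y=C_D(X\setminus R)$. Together with the previous paragraph this also shows $C_D(Y)=Y$.

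For the second identity I show both inclusions between $O$ and $\Gamma(Y)$.

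First, $O\subseteq\Gamma(Y)$. If $x\in Y$, then $C_D(Y\cup\{x\})=C_D(Y)=Y\ni x$. If $x\in O\setminus Y=R$, let $s$ be its proposal date. Then $x\in C_{d(x)}(X\setminus R^{s-1})$ and $x\notin R^{s-1}$. Moreover $Y\subseteq X\setminus R\subseteq X\setminus R^{s-1}$, so $Y\cup\{x\}\subseteq X\setminus R^{s-1}$. Substitutability then gives $x\in C_{d(x)}(Y\cup\{x\})$.

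Second, $\Gamma(Y)\subseteq O$. If $x\notin O$, then $x\notin R$, so $Y\cup\{x\}\subseteq X\setminus R$. Using $C_{d(x)}(X\setminus R)=Y_{d(x)}$, we have
\[
C_{d(x)}(X\setminus R)\subseteq Y\cup\{x\}\subseteq X\setminus R .
\]
IRC gives $C_{d(x)}(Y\cup\{x\})=Y_{d(x)}$, which does not contain $x$. So $x\notin\Gamma(Y)$.

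The only delicate point is the monotonicity $R^{s-1}\subseteq R^{t^*}$, together with the fact that terminal survivors were never rejected. This is exactly what Lemmas~\ref{lem:rejection-irrevocability} and~\ref{cor:cumulative-rejections} supply, so once they are invoked the remaining steps are routine.
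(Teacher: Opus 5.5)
Your proof is correct and follows the paper's argument essentially step for step: IRC applied to $C_D(X\setminus R^{t^*})\subseteq O^{t^*}\setminus R^{t^*}=Y^{t^*}$, substitutability from each contract's proposal-date menu $X\setminus R^{s-1}$ to get both $Y^{t^*}=C_D(X\setminus R^{t^*})$ and $O^{t^*}\subseteq\Gamma(Y^{t^*})$, and the identity $C_D(X\setminus R^{t^*})=Y^{t^*}$ to exclude unproposed contracts from $\Gamma(Y^{t^*})$. The differences are cosmetic: you show $Y^{t^*}\subseteq C_D(X\setminus R^{t^*})$ directly rather than $Y^{t^*}\subseteq C_D(Y^{t^*})$, and you use IRC where the paper uses path independence. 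One small correction: $R^{s-1}\subseteq R^{t^*}$ holds by construction, so what Lemma~\ref{cor:cumulative-rejections} actually supplies is $Y^{t^*}\cap R^{t^*}=\emptyset$.
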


\begin{lemma}[Stable-allocation invariant]
\label{lem:COP-stable-invariant}
For every stable allocation \(Y\),
\[
O^t\subseteq\Gamma(Y)
\qquad\text{and}\qquad
R^t\cap Y=\emptyset
\]
for all \(t=0,\ldots,t^*\).
\end{lemma}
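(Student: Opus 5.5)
The plan is to prove both inclusions together by induction on \(t\), fixing a stable allocation \(Y\) throughout. The base case \(t=0\) is immediate, since \(O^0=R^0=\emptyset\). For the inductive step I assume \(O^t\subseteq\Gamma(Y)\) and \(R^t\cap Y=\emptyset\), and I prove both statements for \(t+1\le t^*\). The proposal inclusion comes first, and the rejection statement is then derived from it.

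\emph{Proposals.} Let \(x=x_{t+1}\in C_d(X\setminus R^t)\). By the inductive hypothesis \(Y\cap R^t=\emptyset\), and \(x\notin R^t\), so \(Y\cup\{x\}\subseteq X\setminus R^t\). Substitutability of \(C_d\) then gives \(x\in C_d(Y\cup\{x\})\), that is, \(x\in\Gamma(Y)\). Hence \(O^{t+1}\subseteq\Gamma(Y)\).

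\emph{Rejections.} By Lemma~\ref{cor:cumulative-rejections}, \(R^{t+1}=R_H(O^{t+1})\). Suppose, for a contradiction, that some \(y\in Y\) satisfies \(y\in O^{t+1}\setminus C_H(O^{t+1})\), and write \(h=h(y)\) and \(d=d(y)\).
\begin{enumerate}
\item Since \(Y\) is stable, Proposition~\ref{prop:stability-characterization} gives \(y\in C_h(\Gamma(Y))\).
\item Since \(O^{t+1}\subseteq\Gamma(Y)\) and \(y\in O^{t+1}\), substitutability across doctors (Proposition~\ref{prop:US-SAD}) gives \(|C_h(O^{t+1})_d|\ge 1\).
\item So \(C_h(O^{t+1})\) contains a contract \(y'\neq y\) with the same doctor \(d\) and the same hospital \(h\).
\end{enumerate}
This is the step I expect to be the crux: unilateral substitutability does not rule out the rejection by itself. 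It only forces the existence of a replacement contract within the same bilateral relationship. The remaining work is to make doctor unitarity contradict that replacement.

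\emph{Contradiction via doctor unitarity.} First, neither \(y\) nor \(y'\) lies in \(R^t\). For \(y\) this is the inductive hypothesis. For \(y'\), note that \(y'\in C_H(O^{t+1})\), so \(y'\notin R_H(O^{t+1})\), which contains \(R^t\) by Lemmas~\ref{lem:rejection-irrevocability} and~\ref{cor:cumulative-rejections}. Now let \(y\) and \(y'\) be proposed at steps \(s\) and \(s'\), both at most \(t+1\), and let \(m=\max\{s,s'\}\). By construction of the process,
\[
y\in C_d(X\setminus R^{s-1})
\qquad\text{and}\qquad
y'\in C_d(X\setminus R^{s'-1}).
\]
The sets \(R^r\) are nondecreasing in \(r\), so \(X\setminus R^{m-1}\) is contained in both \(X\setminus R^{s-1}\) and \(X\setminus R^{s'-1}\). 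Moreover, \(R^{m-1}\subseteq R^t\), so both \(y\) and \(y'\) belong to \(X\setminus R^{m-1}\). Applying doctor substitutability to each contract therefore gives
\[
\{y,y'\}\subseteq C_d(X\setminus R^{m-1}).
\]
This contradicts the unitarity of \(C_d\), since \(y\) and \(y'\) are distinct contracts with the same doctor--hospital pair. Hence \(R^{t+1}\cap Y=\emptyset\), which completes the induction.
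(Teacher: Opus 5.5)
Your proof is correct and follows essentially the same route as the paper: the same induction, the same substitutability argument for \(O^{t+1}\subseteq\Gamma(Y)\), and the same use of substitutability across doctors to extract a same-pair replacement \(y'\in C_h(O^{t+1})\) that doctor unitarity rules out. The only difference is the menu used for the final contradiction: you place both \(y\) and \(y'\) in \(C_d(X\setminus R^{m-1})\) using their proposal conditions, which is the device of Lemma~\ref{lem:rejection-irrevocability}, whereas the paper uses the two-element menu \(\{x_i,x_j\}\), combining \(x_i\)'s proposal condition with \(x_j\in\Gamma(Y)\) and \(x_i\in Y\).
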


The proofs are in Appendix~\ref{app:COP-proofs}.

\subsection{Doctor optimality}

\begin{theorem}[Doctor-optimality of cumulative offers]
\label{thm:COP-doctor-optimal}
Every doctor-proposing cumulative-offer trajectory terminates at the greatest
stable allocation under the doctor Blair order. Hence, its outcome is the
unique doctor-optimal stable allocation.
\end{theorem}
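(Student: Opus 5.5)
The plan has two parts. First, show that the terminal allocation $Y^{COP}:=Y^{t^*}$ is stable. Then show that $C_D(Y^{COP}\cup Y)=Y^{COP}$ for every $Y\in\mathcal S$.

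For stability, I invoke Proposition~\ref{prop:stability-characterization}. It suffices to check that $Y^{COP}=C_H(\Gamma(Y^{COP}))$. By Lemma~\ref{lem:COP-terminal-identities}, $O^{t^*}=\Gamma(Y^{t^*})$. By construction, $Y^{t^*}=C_H(O^{t^*})$. Chaining these gives $C_H(\Gamma(Y^{t^*}))=C_H(O^{t^*})=Y^{t^*}$. The hypotheses of Proposition~\ref{prop:stability-characterization} hold under Assumption~\ref{ass:main}, so $Y^{COP}\in\mathcal S$.

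For the Blair comparison, fix $Y\in\mathcal S$. Lemma~\ref{lem:COP-stable-invariant} at $t=t^*$ gives $R^{t^*}\cap Y=\emptyset$, so $Y\subseteq X\setminus R^{t^*}$. Stability makes $Y^{COP}$ individually rational, hence $Y^{COP}=C_H(O^{t^*})\subseteq O^{t^*}\setminus R_H(O^{t^*})$. Lemma~\ref{cor:cumulative-rejections} identifies $R_H(O^{t^*})$ with $R^{t^*}$, so $Y^{COP}\subseteq X\setminus R^{t^*}$ as well. Therefore
\[
Y^{COP}\cup Y\subseteq X\setminus R^{t^*}.
\]
By Lemma~\ref{lem:COP-terminal-identities}, $C_D(X\setminus R^{t^*})=Y^{COP}$, and $Y^{COP}$ lies inside the smaller menu $Y^{COP}\cup Y$. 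Applying IRC doctor by doctor then yields
\[
C_D(Y^{COP}\cup Y)=C_D(X\setminus R^{t^*})=Y^{COP}.
\]
This says $Y^{COP}\succeq_D^B Y$. Since $\succeq_D^B$ is a partial order on doctor-individually rational allocations (by path independence), a greatest element of $(\mathcal S,\succeq_D^B)$ is unique. Hence every trajectory ends at the same allocation, and that allocation is the doctor-optimal stable allocation.

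The substantive work sits in Lemmas~\ref{lem:rejection-irrevocability}--\ref{lem:COP-stable-invariant}, which may be assumed here. Within this proof, the only step needing care is showing that both $Y^{COP}$ and an arbitrary stable $Y$ avoid $R^{t^*}$, so that IRC applies on the menu $Y^{COP}\cup Y$. For $Y$ this is the invariant of Lemma~\ref{lem:COP-stable-invariant}. For $Y^{COP}$ it follows from the cumulative-rejection identity of Lemma~\ref{cor:cumulative-rejections}. If that identity were unavailable, I would instead argue directly from $Y^{COP}=C_D(X\setminus R^{t^*})\subseteq X\setminus R^{t^*}$, which is immediate because a choice set is contained in its menu.
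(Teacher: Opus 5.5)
Your proposal is correct and follows the paper's proof essentially step for step. You get stability from $Y^{COP}=C_H(O^{t^*})=C_H(\Gamma(Y^{COP}))$ and Proposition~\ref{prop:stability-characterization}, the Blair comparison from $Y\subseteq X\setminus R^{t^*}$ (Lemma~\ref{lem:COP-stable-invariant}) together with $C_D(X\setminus R^{t^*})=Y^{COP}$, and uniqueness from antisymmetry. The only differences are cosmetic: you apply IRC to the sandwich $C_D(X\setminus R^{t^*})\subseteq Y^{COP}\cup Y\subseteq X\setminus R^{t^*}$ where the paper writes the same step via path independence, and the inclusion $Y^{COP}\subseteq X\setminus R^{t^*}$ needs no appeal to individual rationality, since it holds by the definition of $R_H$ (or, as you note, because a choice lies in its menu).
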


\begin{proof}
By Lemma~\ref{lem:COP-terminal-identities},
\[
Y^{t^*}=C_H(O^{t^*})=C_H(\Gamma(Y^{t^*})),
\]
so Proposition~\ref{prop:stability-characterization} implies that
\(Y^{t^*}\) is stable.

Fix \(Y\in\mathcal S\). Lemma~\ref{lem:COP-stable-invariant} gives
\(Y\subseteq X\setminus R^{t^*}\). By path independence and
Lemma~\ref{lem:COP-terminal-identities},
\begin{align*}
C_D(Y^{t^*}\cup Y)
&=C_D\bigl(C_D(X\setminus R^{t^*})\cup Y\bigr)\\
&=C_D(X\setminus R^{t^*})
=Y^{t^*}.
\end{align*}
Thus, \(Y^{t^*}\succeq_D^B Y\). The terminal allocation is therefore the
greatest stable allocation under the doctor Blair order and is doctor-optimal. Uniqueness follows from
antisymmetry of the Blair order.
\end{proof}

\begin{corollary}[Order independence]
\label{cor:COP-order-independence}
All doctor-proposing cumulative-offer trajectories terminate at the same
allocation.
\end{corollary}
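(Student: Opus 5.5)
The corollary follows directly from Theorem~\ref{thm:COP-doctor-optimal} together with the uniqueness of a greatest element in a partial order.

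Fix two arbitrary trajectories of the doctor-proposing cumulative offer process, possibly differing in the order in which eligible contracts are proposed, and let their terminal allocations be \(Y^{t^*}\) and \(\tilde Y^{\tilde t^*}\). By Theorem~\ref{thm:COP-doctor-optimal}, each is stable and is the greatest element of \((\mathcal S,\succeq_D^B)\). Concretely, since \(\tilde Y^{\tilde t^*}\in\mathcal S\), applying the theorem to the first trajectory gives
\[
C_D\bigl(Y^{t^*}\cup\tilde Y^{\tilde t^*}\bigr)=Y^{t^*},
\]
and symmetrically \(C_D(\tilde Y^{\tilde t^*}\cup Y^{t^*})=\tilde Y^{\tilde t^*}\). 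Because the two left-hand sides are the same set, \(Y^{t^*}=\tilde Y^{\tilde t^*}\).

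The argument needs only that both terminal allocations lie in \(\mathcal S\), which the theorem supplies via Proposition~\ref{prop:stability-characterization}. It does not even require antisymmetry of \(\succeq_D^B\) beyond this one-line computation. There is no real obstacle here; all the substantive work, namely irrevocability of rejections in Lemma~\ref{lem:rejection-irrevocability} and the stable-allocation invariant, has already been absorbed into the theorem.
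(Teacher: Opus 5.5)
Your proof is correct and matches the paper's: the paper just cites uniqueness of a greatest element in a partial order. Your computation of \(C_D\bigl(Y^{t^*}\cup\tilde Y^{\tilde t^*}\bigr)\) in both directions is exactly the antisymmetry of the doctor Blair order written out explicitly.
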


\begin{proof}
A greatest element of a partial order is unique.
\end{proof}

The corollary permits proposals to arrive in any admissible sequential order.
The final stable assignment is therefore insensitive to proposal timing,
even though intermediate tentative assignments may differ.

Order independence alone does not imply doctor-optimality. In the
nonunitary market of \citet[Example~3]{Bando2026}, the cumulative offer
process reaches the same stable outcome for every order of proposals, but
the stable set has no doctor-optimal element.
Theorem~\ref{thm:COP-doctor-optimal} shows that bilateral unitarity
strengthens procedural invariance into an extremal selection result: every
trajectory reaches the greatest stable allocation under the doctor Blair
order.

\subsection{Hospital-side maximality}

Two distinct restrictions delimit the preceding result.
\citet[Example~3]{Bando2026} show that unilateral substitutability need not
guarantee doctor-optimal stability once unitarity is removed. The following
result concerns the complementary question: holding the unitary domain
fixed, can unilateral substitutability be weakened while preserving a
universal guarantee of doctor-optimal stability? The maximal-domain theorem
of \citet{Kasuya2021} gives a negative answer.

Let \(\mathfrak C^u\) denote the class of unitary hospital choice functions
satisfying IRC, and let \(\mathfrak C^{US,u}\) be its subclass satisfying
unilateral substitutability. A hospital-choice domain \(\mathfrak D\subseteq
\mathfrak C^u\) \emph{guarantees doctor-optimal stability} if every finite
market whose hospital choice functions belong to \(\mathfrak D\), together
with every profile of substitutable, unitary doctor choice functions
satisfying IRC, admits a doctor-optimal stable allocation.

\begin{corollary}[Maximality of unilateral substitutability]
\label{cor:US-maximality}
For the class of finite markets with at least two hospitals,
\(\mathfrak C^{US,u}\) is inclusion-maximal among the subclasses of
\(\mathfrak C^u\) that guarantee doctor-optimal stability.
\end{corollary}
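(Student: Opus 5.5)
The corollary has two parts: \(\mathfrak C^{US,u}\) itself guarantees doctor-optimal stability, and no strictly larger subclass of \(\mathfrak C^u\) does. The first part is immediate from Theorem~\ref{thm:COP-doctor-optimal}. If every hospital choice function lies in \(\mathfrak C^{US,u}\) and every doctor choice function is substitutable, unitary, and satisfies IRC, then Assumption~\ref{ass:main} holds. The cumulative offer process then terminates at a doctor-optimal stable allocation. No restriction on the number of hospitals is needed here.

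For maximality, I will argue by contradiction. Suppose \(\mathfrak C^{US,u}\subsetneq\mathfrak D\subseteq\mathfrak C^u\) and \(\mathfrak D\) guarantees doctor-optimal stability. Then there is some \(C_{h_1}\in\mathfrak D\) that is unitary and satisfies IRC but is not unilaterally substitutable. The task is to build a finite market with at least two hospitals in which:
\begin{itemize}
\item \(h_1\) uses \(C_{h_1}\), suitably relabelled;
\item every other hospital uses a choice function from \(\mathfrak C^{US,u}\), and hence from \(\mathfrak D\);
\item every doctor has unit demand;
\item there is no doctor-optimal stable allocation.
\end{itemize}
Unit-demand doctor choice functions induced by strict preferences over single contracts are substitutable and satisfy IRC. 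They are also trivially unitary, since a singleton choice set cannot contain two contracts. So such a market lies in the class quantified over, and it would contradict the guarantee.

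The construction is exactly the maximal-domain theorem of \citet{Kasuya2021}, which I plan to invoke rather than reprove. From a violation \(z\in C_{h_1}(Y\cup\{x,z\})\), \(z\notin C_{h_1}(Y\cup\{z\})\), \(d(z)\notin d(Y)\), Kasuya builds an auxiliary hospital \(h_2\) with a unilaterally substitutable (indeed substitutable) unitary IRC choice function. He also specifies unit-demand preferences for \(d(z)\), \(d(x)\) and the doctors in \(d(Y)\), so that the resulting market has two stable allocations that are not Blair-comparable for doctors. Because doctors have unit demand, Blair comparability reduces to each doctor's preference, so the absence of a doctor-optimal stable allocation in Kasuya's sense coincides with the failure of our definition.

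The main obstacle is verifying that Kasuya's counterexample fits our domain in two respects. First, the added hospital's choice function must be unitary and satisfy IRC, and the contract set must be such that the contracts of \(h_1\) can be embedded without change; relabelling doctors and contracts preserves both properties. Second, his necessity statement is for the unitary hospital domain under IRC, with at least two hospitals. I would check these points directly, and if his auxiliary hospital were not unitary, I would replace it by a responsive, one-contract-per-doctor choice function that reproduces the same rejections. Given this, \(\mathfrak D\) fails the guarantee, which contradicts the assumption and proves inclusion-maximality.
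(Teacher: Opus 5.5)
Your overall route is the paper's. You use Theorem~\ref{thm:COP-doctor-optimal} for the guarantee on \(\mathfrak C^{US,u}\), invoke \citet[Theorem~2]{Kasuya2021} as a black box for necessity, and check that unit-demand doctors and unit-demand auxiliary hospitals lie in the required domains. The gap is in what you say the black box delivers. As the paper uses it, the cited theorem is a dichotomy rather than a single construction. For \(C_h\notin\mathfrak C^{US,u}\), either \(C_h\) \emph{by itself} fails to guarantee a doctor-optimal stable allocation, or \(C_h\) can be combined with unit-demand choice functions of other hospitals so that \emph{no stable allocation} exists. You describe an auxiliary \(h_2\) that is always added and two Blair-incomparable stable allocations; this matches neither branch. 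In the second branch the misdescription is harmless, since having no stable allocation rules out a doctor-optimal one, and your Blair-comparability remark is not even needed. In the first branch, however, the witnessing market may have \(h\) as its only hospital. It then lies outside the class of markets with at least two hospitals over which the corollary quantifies, and your plan never addresses this. You assume, without justification, that Kasuya's construction already comes with a second hospital.

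The missing step is the one the paper supplies: adjoin a dummy hospital whose choice function is identically empty, for instance one with no contracts at all. Such a choice function is trivially unitary, satisfies IRC, and is vacuously unilaterally substitutable, so it lies in \(\mathfrak C^{US,u}\subseteq\mathfrak D\). Individual rationality forces stable allocations to contain none of its contracts, and it never chooses a blocking contract. Hence the stable set and all Blair comparisons are unchanged. With this step, both branches yield a market in the required class with no doctor-optimal stable allocation. Your observation that \(\succeq_D^B\) reduces to each doctor's weak preference under unit demand is correct, and it is what identifies Kasuya's notion of doctor-optimality with the Blair-order definition in the first branch. Your fallback of replacing a possibly non-unitary auxiliary hospital is unnecessary. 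The auxiliary hospitals in the construction are unit demand and hence already in \(\mathfrak C^{US,u}\).
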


\begin{proof}
Theorem~\ref{thm:COP-doctor-optimal} establishes the guarantee on
\(\mathfrak C^{US,u}\). Consider any
\(C_h\in\mathfrak C^u\setminus\mathfrak C^{US,u}\). By
\citet[Theorem~2]{Kasuya2021}, either \(C_h\) itself fails to guarantee the
existence of a doctor-optimal stable allocation, or it can be combined with
unit-demand choice functions of other hospitals so that even stability is not
guaranteed. In the first case, a dummy hospital whose choice is always empty
can be added without changing the failure. The doctor choice functions in
the construction are induced by unit-demand preferences, and the auxiliary
hospital choice functions are unit demand. They therefore belong,
respectively, to the substitutable, unitary, and IRC
doctor domain and to \(\mathfrak C^{US,u}\). Hence, adjoining
\(C_h\) to \(\mathfrak C^{US,u}\) destroys the guarantee.
\end{proof}

The corollary is a domain statement. It does not assert that every market
containing a non-unilaterally-substitutable hospital lacks a doctor-optimal
stable allocation. Rather, no such hospital choice function can be added to
the US domain while preserving existence uniformly over the remaining market
and doctor choice functions. On the embedded unit-demand doctor subdomain,
\citet[Theorem~1]{Kasuya2021} further shows that US is necessary and sufficient
within the observably substitutable-across-doctors domain.

\section{Weakly hospital-quasi-stable allocations}
\label{sec:weak-quasi-stability}

The construction below relies on bilateral unitarity. Substitutability
across doctors alone does not generate an analogous doctor-optimal boundary
in the nonunitary domain, as demonstrated by
\citet[Example~3]{Bando2026}. Our purpose is to identify the order structure
that becomes available once alternative contracts within each bilateral
relationship are mutually exclusive.

The order-theoretic argument adapts firm-quasi-stability to unilateral
substitutability. Firm-quasi-stability preserves individual contracts;
unilateral substitutability generally preserves only the identity of the
doctor represented in a hospital's choice. This motivates the following
relaxation.

\subsection{The relaxed domain}

\begin{definition}
\label{def:weak-HQS}
An allocation \(Y\) is \emph{weakly hospital-quasi-stable} if
\[
C_D(Y)=Y
\]
and
\[
d(Y_h)\subseteq d(C_h(\Gamma(Y)))
\qquad
\text{for every }h\in H.
\]
\end{definition}

Weak hospital-quasi-stability separates continuity of relationships from
continuity of terms. Doctors choose their current portfolios. Hospitals may
replace a current contract by another contract with the same doctor, but each
doctor currently employed by a hospital remains represented when the hospital
chooses from all doctor-acceptable contracts. The relaxed domain thus
describes relationship-preserving renegotiation rather than stability.

Let
\[
\mathcal Q^w:=\{Y\subseteq X:Y\text{ is weakly hospital-quasi-stable}\}.
\]
The empty allocation belongs to \(\mathcal Q^w\). Every stable allocation
also belongs to \(\mathcal Q^w\), because stability implies
\(Y=C_H(\Gamma(Y))\). Hence,
\[
\mathcal S\subseteq\mathcal Q^w.
\]
We order \(\mathcal Q^w\) by the doctor Blair order.

\subsection{Lattice structure and improvement}

\begin{theorem}[Weakly hospital-quasi-stable lattice]
\label{thm:weak-HQS-lattice}
The partially ordered set \((\mathcal Q^w,\succeq_D^B)\) is a finite lattice.
For every nonempty collection \(\{Y^i\}_{i\in I}\subseteq\mathcal Q^w\),
\[
\bigvee_{i\in I}Y^i
=
C_D\left(\bigcup_{i\in I}Y^i\right).
\]
\end{theorem}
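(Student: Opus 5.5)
Since \(\mathcal Q^w\) is finite and contains \(\emptyset\), it suffices to show that \(Z:=C_D\bigl(\bigcup_{i\in I}Y^i\bigr)\) lies in \(\mathcal Q^w\) and is the least upper bound of \(\{Y^i\}\) in \((\mathcal Q^w,\succeq_D^B)\). A finite poset with a least element in which every nonempty family has a join is a lattice, because meets are joins of the (nonempty) sets of lower bounds. The upper-bound property comes from path independence. For each \(i\),
\[
C_D(Z\cup Y^i)=C_D\Bigl(C_D\bigl(\textstyle\bigcup_j Y^j\bigr)\cup Y^i\Bigr)=C_D\bigl(\textstyle\bigcup_j Y^j\bigr)=Z,
\]
so \(Z\succeq_D^B Y^i\). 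Leastness is similar. If \(W\) is doctor-individually rational and \(C_D(W\cup Y^i)=W\) for all \(i\), then repeated use of path independence gives \(C_D(W\cup\bigcup_i Y^i)=W\), and hence
\[
C_D(W\cup Z)=C_D\bigl(W\cup\textstyle\bigcup_i Y^i\bigr)=W.
\]
This holds for upper bounds in the larger doctor-individually-rational poset, so it holds in particular within \(\mathcal Q^w\). Also \(C_D(Z)=Z\) by idempotence.

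The substantive step is the hospital condition \(d(Z_h)\subseteq d(C_h(\Gamma(Z)))\). I would first prove the monotonicity \(\Gamma(Z)\subseteq\Gamma(Y^i)\) whenever \(Z\succeq_D^B Y^i\). Take \(x\in\Gamma(Z)\), so \(x\in C_D(Z\cup\{x\})\). By path independence \(C_D(Z\cup Y^i\cup\{x\})=C_D(Z\cup\{x\})\ni x\), and substitutability then gives \(x\in C_D(Y^i\cup\{x\})\). Next I also need \(Z\subseteq\Gamma(Z)\), which holds since \(C_D(Z)=Z\).

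Now fix \(h\) and \(z\in Z_h\). Then \(z\in Y^i_h\) for some \(i\), so weak quasi-stability of \(Y^i\) yields some \(y\in C_h(\Gamma(Y^i))\) with \(d(y)=d(z)\). I want to transfer the representation of doctor \(d(z)\) to the smaller menu \(\Gamma(Z)\cap X_h\), which contains \(z\). Proposition~\ref{prop:US-SAD} (substitutability across doctors) gives this only when the chosen contract with \(d(z)\) lies in the smaller menu. So the key claim is that \(y\in\Gamma(Z)\); the natural candidate for this is \(y=z\).

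This is the main obstacle. I would argue by unitarity and doctor substitutability, as in the proof of Lemma~\ref{lem:rejection-irrevocability}. Suppose \(y\neq z\). Both \(y\) and \(z\) involve the same pair \((d(z),h)\), and \(z\in Z\subseteq C_D(Z\cup\{y\})\) by substitutability. Unitarity of \(C_{d(z)}\) then forbids \(y\in C_D(Z\cup\{y\})\), so \(y\notin\Gamma(Z)\). That is exactly the failure I must rule out, so I would instead apply unilateral substitutability in its original form. Let \(M=\Gamma(Y^i)_h\) and \(M'=\Gamma(Z)_h\subseteq M\), with \(z\in M'\). One subcase is that some contract with \(d(z)\) in \(M'\) is chosen from \(M\); then substitutability across doctors gives \(d(z)\in d(C_h(M'))\). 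Otherwise I would look for a contradiction with \(Y^i\)'s doctors preferring \(z\). Because \(z\in Y^i\) and \(C_D(Y^i\cup\{y\})\) contains \(y\), unitarity would force \(z\) to be dropped, which contradicts path independence applied to \(C_D(Z\cup Y^i)=Z\ni z\). More concretely, I expect to show that any \(y\) with \(d(y)=d(z)\), \(h(y)=h\), \(y\in\Gamma(Y^i)\) and \(y\neq z\) is impossible whenever \(z\in Y^i\cap Z\); if this works, then \(y=z\in M'\). If this direct route fails, the fallback is to choose \(i\) so that \(z\) is the contract of \(d(z)\) with \(h\) at \(Y^i\), and to argue through IRC restricted to menus between \(C_h(M)\) and \(M\).
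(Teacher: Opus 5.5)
Your outer structure is correct and matches the paper's. Idempotence gives \(C_D(Z)=Z\), and path independence gives both the upper-bound and least-upper-bound properties. Your proof that \(\Gamma(Z)\subseteq\Gamma(Y^i)\) is valid, and in fact holds for any \(Z\succeq_D^B Y^i\). The passage from nonempty joins plus the least element \(\emptyset\) to a lattice is fine. You also rightly isolate \(y\in\Gamma(Z)\) as the claim after which substitutability across doctors finishes the argument.

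The gap is in how you propose to prove that claim. The step ``\(z\in Z\subseteq C_D(Z\cup\{y\})\) by substitutability'' is invalid. Substitutability transfers choices from larger menus to smaller ones; it says nothing about \(Z\) surviving the addition of \(y\). Consequently your inference that \(y\neq z\) forces \(y\notin\Gamma(Z)\) is false. So is the lemma you then aim for, namely that \(y\neq z\) is impossible whenever \(z\in Y^i\cap Z\).

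Here is a counterexample. Take one doctor \(d\), one hospital \(h\), and two contracts \(y\neq z\) for the pair \((d,h)\). Let \(C_d(\{y,z\})=C_h(\{y,z\})=\{y\}\), with each singleton chosen from itself. Assumption~\ref{ass:main} holds, and \(Y^1=\{z\}\) is weakly hospital-quasi-stable with \(\Gamma(Y^1)=\{y,z\}\) and \(C_h(\Gamma(Y^1))=\{y\}\). Then \(Z=C_D(Y^1)=\{z\}\). The only witness for \(Y^1\) is \(y\neq z\), yet \(y\in\Gamma(Z)\). So \(y=z\) cannot be proved, and your fallback of choosing \(i\) differently is vacuous when \(|I|=1\).

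Your ``otherwise'' branch does have the right case split: it is exactly \(y\notin\Gamma(Z)\). But the contradiction you assert does not follow as written. That \(z\) is dropped from \(Y^i\cup\{y\}\) is compatible with \(C_D(Z\cup Y^i)=Z\ni z\), since \(y\) need not lie in \(Z\cup Y^i\). You never invoke the branch hypothesis.

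The missing step is to use \(y\notin\Gamma(Z)\) through IRC. Let \(d=d(z)\) and \(U=\bigcup_i Y^i\), and suppose \(y\notin C_d(Z\cup\{y\})\).
\begin{enumerate}
\item Then \(y\neq z\), because \(Z\subseteq\Gamma(Z)\).
\item IRC together with \(C_d(Z)=Z_d\) gives \(C_d(Z\cup\{y\})=Z_d\ni z\).
\item Path independence gives \(C_d(U\cup\{y\})=C_d(Z\cup\{y\})\), so \(z\in C_d(U\cup\{y\})\).
\item Substitutability then yields \(z\in C_d(Y^i\cup\{y\})\).
\item Since \(y\in\Gamma(Y^i)\), also \(y\in C_d(Y^i\cup\{y\})\). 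Two distinct contracts for the pair \((d,h)\) are chosen, contradicting doctor unitarity.
\end{enumerate}
Hence \(y\in\Gamma(Z)\subseteq\Gamma(Y^i)\). Because \(y\in C_h(\Gamma(Y^i))\), substitutability across doctors gives \(d\in d(C_h(\Gamma(Z)))\). This is the paper's argument. The inclusion \(Z\subseteq C_D(Z\cup\{y\})\) that you asserted unconditionally does hold once \(y\notin\Gamma(Z)\) is assumed, and that is exactly what produces the contradiction.
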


The proof is in Appendix~\ref{app:WHQS-proofs}. Let
\[
Y^Q:=C_D\left(\bigcup_{Y\in\mathcal Q^w}Y\right)
\]
denote the greatest element of \(\mathcal Q^w\). The lattice orders
relationship-preserving states by the doctors' portfolio choices; it does not
impose a lattice structure on stable allocations.

For \(Y\in\mathcal Q^w\), define
\[
T(Y):=C_D(C_H(\Gamma(Y))).
\]
The operator allows hospitals to revise terms from all doctor-acceptable
contracts and then lets doctors reselect their portfolios.

\begin{proposition}[Weak-quasi-stable improvement]
\label{prop:T-improvement}
For every \(Y\in\mathcal Q^w\),
\[
T(Y)\in\mathcal Q^w
\qquad\text{and}\qquad
T(Y)\succeq_D^B Y.
\]
\end{proposition}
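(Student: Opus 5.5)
Set \(Z:=C_H(\Gamma(Y))\) and \(Y':=T(Y)=C_D(Z)\). The plan has two parts. First I show that \(C_D(Z\cup Y)\subseteq Z\); IRC and path independence then give the Blair comparison. Second I show that \(\Gamma(Y')\subseteq\Gamma(Y)\); substitutability across doctors (Proposition~\ref{prop:US-SAD}) then gives membership in \(\mathcal Q^w\).

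\emph{Step 1: \(Y'\succeq_D^B Y\).} By path independence,
\[
C_D(Y'\cup Y)=C_D\bigl(C_D(Z)\cup Y\bigr)=C_D(Z\cup Y),
\]
so it suffices to show \(C_D(Z\cup Y)=C_D(Z)\). By IRC this follows once I show \(C_D(Z\cup Y)\subseteq Z\).

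Take \(y\in Y\setminus Z\), and let \(d=d(y)\) and \(h=h(y)\). Since \(Y\in\mathcal Q^w\), we have \(d\in d(Y_h)\subseteq d(C_h(\Gamma(Y)))\). Hence there is \(z\in Z_h\) with \(d(z)=d\) and \(h(z)=h\), and \(z\neq y\) because \(y\notin Z\).

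Because \(z\in Z\subseteq\Gamma(Y)\), we get \(z\in C_d(Y\cup\{z\})\). Suppose \(y\in C_d(Z\cup Y)\). Since \(Y\cup\{z\}\subseteq Z\cup Y\), doctor substitutability gives \(y\in C_d(Y\cup\{z\})\). Then both \(y\) and \(z\) lie in \(C_d(Y\cup\{z\})\), and they are distinct contracts with the same pair \((d,h)\). This contradicts doctor unitarity. So no contract of \(Y\setminus Z\) is chosen, and \(C_D(Z\cup Y)\subseteq Z\).

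This is the step I expect to be the crux. It is exactly where weak quasi-stability, which only guarantees a same-pair replacement \(z\), has to be combined with doctor unitarity.

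\emph{Step 2: \(Y'\in\mathcal Q^w\).} Doctor individual rationality \(C_D(Y')=Y'\) follows from idempotence.

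Next I show \(\Gamma(Y')\subseteq\Gamma(Y)\). Let \(x\in C_D(Y'\cup\{x\})\). Step 1 gives \(C_D(Y'\cup Y)=Y'\). Path independence then yields
\[
C_D(Y\cup Y'\cup\{x\})=C_D(Y'\cup\{x\}),
\]
so \(x\) is chosen from \(Y\cup Y'\cup\{x\}\). Substitutability then gives \(x\in C_D(Y\cup\{x\})\), that is, \(x\in\Gamma(Y)\).

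Also \(Y'\subseteq\Gamma(Y')\), since \(C_D(Y')=Y'\). Now fix \(h\) and \(y'\in Y'_h\). Then \(y'\in Z_h=C_h(\Gamma(Y))\) and \(y'\in\Gamma(Y')\subseteq\Gamma(Y)\). Apply substitutability across doctors to the menus \(\Gamma(Y')\subseteq\Gamma(Y)\) and the doctor \(d(y')\):
\[
1\le\bigl|(C_h(\Gamma(Y))\cap\Gamma(Y'))_{d(y')}\bigr|\le\bigl|C_h(\Gamma(Y'))_{d(y')}\bigr|.
\]
Hence \(d(y')\in d(C_h(\Gamma(Y')))\). This gives \(d(Y'_h)\subseteq d(C_h(\Gamma(Y')))\) for every \(h\), so \(T(Y)\in\mathcal Q^w\).
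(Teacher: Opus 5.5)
Your proof is correct and follows essentially the paper's route. Your Step~1 shows \(C_D(Z\cup Y)=C_D(Z)\), which is the case \(A=\emptyset\) of the paper's replacement identity (Lemma~\ref{lem:replacement-identity}), obtained here by a pointwise unitarity contradiction plus IRC instead of a same-pair block decomposition with path independence. Your derivation of \(\Gamma(Y')\subseteq\Gamma(Y)\) from that identity via path independence covers the case \(A=\{x\}\), and your final substitutability-across-doctors step is the same as the paper's.
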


The proof is in Appendix~\ref{app:WHQS-proofs}. The proposition is a one-step
improvement result; no isotonicity of \(T\) is imposed.

\begin{proposition}[Fixed-point characterization]
\label{prop:T-fixed-points}
For every \(Y\in\mathcal Q^w\),
\[
T(Y)=Y
\quad\Longleftrightarrow\quad
Y\in\mathcal S.
\]
\end{proposition}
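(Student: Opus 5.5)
The plan is to prove both directions directly from Proposition~\ref{prop:stability-characterization}, which says that \(Y\) is stable if and only if \(Y=C_H(\Gamma(Y))\).

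\emph{Stable implies fixed point.} If \(Y\in\mathcal S\), Proposition~\ref{prop:stability-characterization} gives \(C_H(\Gamma(Y))=Y\). Hence \(T(Y)=C_D(Y)\). Individual rationality gives \(C_D(Y)=Y\), so \(T(Y)=Y\).

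\emph{Fixed point implies stable.} Fix \(Y\in\mathcal Q^w\) with \(T(Y)=Y\), and set \(W:=C_H(\Gamma(Y))\). Then \(C_D(W)=Y\). It suffices to show \(W=Y\), because Proposition~\ref{prop:stability-characterization} then yields stability. I would proceed in three steps.

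First, show \(W\subseteq\Gamma(Y)\). By definition \(W\subseteq\Gamma(Y)\), since \(C_H\) selects from its menu.

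Second, show \(W\subseteq Y\). Take \(x\in W\) and let \(d=d(x)\). We have \(x\in\Gamma(Y)\), so \(x\in C_d(Y\cup\{x\})\). Since \(Y=C_D(W)\subseteq W\), we have \(Y_d\cup\{x\}\subseteq W_d\). Path independence gives \(C_d(W)=C_d(C_d(Y\cup\{x\})\cup(W\setminus Y))\), but that route is indirect. A cleaner argument uses substitutability: \(C_d(W)\cap(Y\cup\{x\})\subseteq C_d(Y\cup\{x\})\). This only says what survives and does not yet give \(x\in C_d(W)=Y_d\).

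Instead, I would use path independence as follows. We have \(C_d(Y\cup W)=C_d(C_d(W)\cup Y)=C_d(Y)=Y_d\). Also, since \(Y\subseteq W\), \(C_d(Y\cup W)=C_d(W)\). Now suppose \(x\in W\setminus Y\). By IRC, \(C_d(W)=Y_d\subseteq Y_d\cup\{x\}\subseteq W_d\), so \(C_d(Y\cup\{x\})=Y_d\). This excludes \(x\), contradicting \(x\in\Gamma(Y)\). Hence \(W\subseteq Y\).

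Third, show \(Y\subseteq W\). This is immediate from \(Y=C_D(W)\subseteq W\).

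Combining the second and third steps gives \(W=Y\), that is, \(Y=C_H(\Gamma(Y))\), and so \(Y\in\mathcal S\).

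The key step is the second one: the IRC/path-independence argument showing that every hospital-chosen contract outside \(Y\) would be rejected by its doctor from \(Y\cup\{x\}\). Notably, weak hospital-quasi-stability is not needed for this direction, since membership in \(\mathcal Q^w\) only guarantees that \(T\) is defined. I would check the containment \(Y\subseteq W\) carefully, but it holds trivially because \(C_D\) selects from \(W\), so the argument appears short.
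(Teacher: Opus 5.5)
Your proof is correct and follows the paper's argument: the forward direction uses Proposition~\ref{prop:stability-characterization} together with \(C_D(Y)=Y\). For the converse, applying IRC to \(C_D(S)=Y\subseteq Y\cup\{x\}\subseteq S\) shows that any \(x\in S\setminus Y\) is rejected from \(Y\cup\{x\}\), which contradicts \(x\in\Gamma(Y)\); the exploratory detours in your second step (the substitutability and path-independence attempts) can simply be deleted.
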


\begin{proof}
If \(Y\in\mathcal S\), then
\(C_H(\Gamma(Y))=Y=C_D(Y)\), so \(T(Y)=Y\).

Conversely, suppose \(T(Y)=Y\), and set \(S:=C_H(\Gamma(Y))\). Then
\(C_D(S)=Y\subseteq S\). If \(x\in S\setminus Y\), then
\(x\in\Gamma(Y)\), whereas IRC applied to
\[
C_D(S)=Y\subseteq Y\cup\{x\}\subseteq S
\]
gives \(C_D(Y\cup\{x\})=Y\), a contradiction. Hence \(S=Y\), and
Proposition~\ref{prop:stability-characterization} implies that \(Y\) is
stable.
\end{proof}

\begin{theorem}[Stability of the greatest weakly hospital-quasi-stable allocation]
\label{thm:greatest-weak-HQS-stable}
The greatest weakly hospital-quasi-stable allocation \(Y^Q\) is stable.
\end{theorem}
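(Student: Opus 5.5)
The plan is to combine the improvement property of Proposition~\ref{prop:T-improvement} with the fixed-point characterization of Proposition~\ref{prop:T-fixed-points}, using that \(Y^Q\) is the greatest element of \(\mathcal Q^w\) by Theorem~\ref{thm:weak-HQS-lattice}.

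First, \(Y^Q\in\mathcal Q^w\): the family \(\mathcal Q^w\) is finite and nonempty, since it contains \(\emptyset\). Theorem~\ref{thm:weak-HQS-lattice} identifies its join with \(C_D\bigl(\bigcup_{Y\in\mathcal Q^w}Y\bigr)=Y^Q\), so \(Y^Q\) belongs to \(\mathcal Q^w\) and satisfies \(Y^Q\succeq_D^B Y\) for all \(Y\in\mathcal Q^w\). Next, apply Proposition~\ref{prop:T-improvement} to \(Y^Q\). It gives \(T(Y^Q)\in\mathcal Q^w\), and greatestness then yields \(Y^Q\succeq_D^B T(Y^Q)\). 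The same proposition also gives \(T(Y^Q)\succeq_D^B Y^Q\).

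Antisymmetry of the doctor Blair order on doctor-individually rational allocations now forces \(T(Y^Q)=Y^Q\). Concretely, both relations say \(C_D(Y^Q\cup T(Y^Q))\) equals \(Y^Q\) and also equals \(T(Y^Q)\). Both allocations are doctor-individually rational because they lie in \(\mathcal Q^w\), which requires \(C_D(Y)=Y\). Finally, Proposition~\ref{prop:T-fixed-points} applied to \(Y^Q\in\mathcal Q^w\) converts the fixed-point property into \(Y^Q\in\mathcal S\).

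I expect essentially no obstacle at this level. All the substantive work sits in the lattice theorem and the improvement proposition, which I take as given. The only point to check is that antisymmetry is available, and it is: the Blair relation is defined as an equality of a single set \(C_D(Y\cup Y')\), so the two inequalities identify \(Y^Q\) and \(T(Y^Q)\) directly.
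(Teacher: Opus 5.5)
Your proposal is correct and matches the paper's proof: Proposition~\ref{prop:T-improvement} and the greatestness of \(Y^Q\) force \(T(Y^Q)=Y^Q\) by antisymmetry, and Proposition~\ref{prop:T-fixed-points} then yields stability. You also make explicit two points the paper leaves implicit: that \(Y^Q\in\mathcal Q^w\) follows from the join formula, and that antisymmetry follows immediately from the definition of \(\succeq_D^B\).
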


\begin{proof}
Proposition~\ref{prop:T-improvement} gives
\(T(Y^Q)\in\mathcal Q^w\) and \(T(Y^Q)\succeq_D^B Y^Q\). The maximality of
\(Y^Q\) and antisymmetry imply \(T(Y^Q)=Y^Q\). The result follows from
Proposition~\ref{prop:T-fixed-points}.
\end{proof}

\section{Equivalent characterizations and consequences}
\label{sec:equivalence}

\begin{corollary}[Equivalent characterizations]
\label{thm:equivalent-characterizations}
The following allocations coincide:
\begin{enumerate}
    \item the outcome of every doctor-proposing cumulative-offer trajectory;
    \item the greatest weakly hospital-quasi-stable allocation;
    \item the greatest stable allocation under the doctor Blair order;
    \item the unique doctor-optimal stable allocation.
\end{enumerate}
Their common value is denoted by \(Y^D\).
\end{corollary}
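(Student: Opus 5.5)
The plan is to route everything through the greatest stable allocation, item~3, and show that each of the other three descriptions picks out that same allocation.

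\textbf{Items 1, 3, 4.} Theorem~\ref{thm:COP-doctor-optimal} already states that every cumulative-offer trajectory terminates at the greatest element of \((\mathcal S,\succeq_D^B)\). Corollary~\ref{cor:COP-order-independence} states that all trajectories share this outcome. So items~1 and~3 coincide. By definition, a doctor-optimal stable allocation is exactly a greatest element of \((\mathcal S,\succeq_D^B)\). Such an element exists by the theorem and is unique by antisymmetry of the Blair order, so items~3 and~4 coincide.

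\textbf{Item 2.} Next I identify \(Y^Q\) with the greatest stable allocation. Theorem~\ref{thm:greatest-weak-HQS-stable} gives \(Y^Q\in\mathcal S\). Since \(\mathcal S\subseteq\mathcal Q^w\) and \(Y^Q\) is the greatest element of \((\mathcal Q^w,\succeq_D^B)\) by Theorem~\ref{thm:weak-HQS-lattice}, we have \(Y^Q\succeq_D^B Y\) for every \(Y\in\mathcal S\). Thus \(Y^Q\) is a stable allocation dominating every stable allocation, so it is the greatest element of \(\mathcal S\). Uniqueness of greatest elements then gives \(Y^Q=Y^{t^*}\).

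\textbf{Remaining check.} The one point needing care is that the Blair order is used on the same underlying poset in both places. It is defined on doctor-individually rational allocations, and both \(\mathcal S\) and \(\mathcal Q^w\) satisfy \(C_D(Y)=Y\), so the comparisons in Theorems~\ref{thm:COP-doctor-optimal} and~\ref{thm:weak-HQS-lattice} refer to the same partial order. With that confirmed, the argument is pure bookkeeping and I expect no real obstacle.
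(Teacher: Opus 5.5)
Your proposal is correct and follows essentially the same route as the paper: Theorem~\ref{thm:COP-doctor-optimal} identifies every cumulative-offer outcome with the greatest stable allocation, which is the unique doctor-optimal one by antisymmetry. Theorem~\ref{thm:greatest-weak-HQS-stable} together with \(\mathcal S\subseteq\mathcal Q^w\) then shows that \(Y^Q\) is a stable allocation Blair-dominating every stable allocation. Your added check, that \(\mathcal S\) and \(\mathcal Q^w\) both consist of doctor-individually rational allocations so the same partial order is used in both places, is left implicit in the paper's proof.
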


\begin{proof}
Theorem~\ref{thm:COP-doctor-optimal} identifies the cumulative-offer outcome
with the greatest stable allocation. By
Theorem~\ref{thm:greatest-weak-HQS-stable}, \(Y^Q\) is stable; since
\(\mathcal S\subseteq\mathcal Q^w\), it Blair-dominates every stable
allocation. Hence, \(Y^Q\) is the same greatest stable allocation. Since the
doctor Blair order is a partial order, this allocation is the unique
doctor-optimal stable allocation.
\end{proof}

This equality is precisely the conclusion that fails in the nonunitary
example of \citet{Bando2026}: order-independent cumulative offers and
stability survive there, but the stable set has no doctor-greatest element.
Bilateral unitarity restores the coincidence between procedural selection
and order-theoretic extremality.

\subsection{Hospital pessimality}

\begin{proposition}[Hospital pessimality]
\label{prop:hospital-pessimality}
For every stable allocation \(Y\),
\[
C_H(Y^D\cup Y)=Y.
\]
Consequently, \(Y^D\) is hospital-pessimal.
\end{proposition}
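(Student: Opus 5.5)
The plan is to get the identity from the revealed doctor-optimality of \(Y^D\), the stability characterization, and IRC on the hospital side. No further analysis of the cumulative offer process should be needed beyond what Corollary~\ref{thm:equivalent-characterizations} already supplies.

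Fix \(Y\in\mathcal S\). The first step is to show the sandwich
\[
Y\subseteq Y^D\cup Y\subseteq\Gamma(Y).
\]
\begin{itemize}
\item Since \(Y\) is individually rational, \(C_D(Y)=Y\), so \(Y\subseteq\Gamma(Y)\), as noted after the definition of \(\Gamma\).
\item Take \(x\in Y^D\). By Theorem~\ref{thm:COP-doctor-optimal} (equivalently Corollary~\ref{thm:equivalent-characterizations}), \(C_D(Y^D\cup Y)=Y^D\), so \(x\in C_D(Y^D\cup Y)\).
\item Since \(Y\cup\{x\}\subseteq Y^D\cup Y\), substitutability of \(C_{d(x)}\) gives \(x\in C_D(Y\cup\{x\})\), that is, \(x\in\Gamma(Y)\).
\end{itemize}
This is the only place where doctor-optimality is used.

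The second step uses Proposition~\ref{prop:stability-characterization}: stability of \(Y\) gives \(C_H(\Gamma(Y))=Y\). Combined with the sandwich, for each hospital \(h\) we have
\[
C_h(\Gamma(Y))=Y_h\subseteq (Y^D\cup Y)_h\subseteq\Gamma(Y)_h .
\]
IRC for \(C_h\) then gives \(C_h(Y^D\cup Y)=C_h(\Gamma(Y))=Y_h\). Taking the union over \(h\in H\) yields \(C_H(Y^D\cup Y)=Y\).

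The "consequently" clause is immediate: \(Y^D\) is stable, and the displayed identity is exactly the defining condition of hospital pessimality.

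I do not expect a genuine obstacle. The only point needing care is that IRC is applied hospital by hospital, with the sets restricted to \(X_h\). This is legitimate because \(C_H\) is defined componentwise.
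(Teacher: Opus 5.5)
Your proposal is correct and follows essentially the same route as the paper: doctor-optimality plus substitutability gives \(Y^D\subseteq\Gamma(Y)\), individual rationality gives \(Y\subseteq\Gamma(Y)\), and IRC applied to \(C_H(\Gamma(Y))=Y\subseteq Y^D\cup Y\subseteq\Gamma(Y)\) yields the identity. Your explicit hospital-by-hospital application of IRC is a harmless extra bit of care that the paper leaves implicit.
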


\begin{proof}
Fix \(Y\in\mathcal S\). Since \(C_D(Y^D\cup Y)=Y^D\), doctor
substitutability implies \(Y^D\subseteq\Gamma(Y)\). Individual rationality
also gives \(Y\subseteq\Gamma(Y)\). Stability yields
\[
C_H(\Gamma(Y))=Y
\subseteq
Y^D\cup Y
\subseteq
\Gamma(Y).
\]
IRC therefore gives
\[
C_H(Y^D\cup Y)=C_H(\Gamma(Y))=Y.
\]
\end{proof}

The proposition identifies opposing revealed-choice boundaries of the stable
set: the doctors' greatest stable allocation is hospital-pessimal in the
pairwise sense defined above. When choices are rationalizable, this has the
usual ordinal interpretation. It does not compare aggregate or cardinal
welfare across the two sides.

\subsection{The rural hospitals theorem}

Choice function \(C_a\) satisfies the \emph{law of aggregate demand} if
\[
Y\subseteq Z
\quad\Longrightarrow\quad
|C_a(Y)|\leq |C_a(Z)|.
\]

\begin{corollary}[Rural hospitals theorem]
\label{cor:rural-hospitals}
Suppose that every doctor and hospital satisfies the law of aggregate demand.
Then, for any \(Y,Y'\in\mathcal S\),
\[
|Y_d|=|Y'_d|
\quad\text{for every }d\in D,
\qquad
|Y_h|=|Y'_h|
\quad\text{for every }h\in H.
\]
\end{corollary}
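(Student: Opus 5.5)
The plan is to compare an arbitrary stable allocation \(Y\) with the doctor-optimal allocation \(Y^D\) from both sides. Doctor-optimality and hospital pessimality give two revealed-choice identities. Combined with the law of aggregate demand, they produce cardinality inequalities running in opposite directions, agent by agent. Summing then forces all of them to be equalities. Since any two stable allocations are each compared with \(Y^D\), it suffices to prove \(|Y_a|=|Y^D_a|\) for every agent \(a\) and every \(Y\in\mathcal S\).

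\textbf{Doctor side.} Fix \(Y\in\mathcal S\). By Corollary~\ref{thm:equivalent-characterizations}, \(C_D(Y^D\cup Y)=Y^D\), so \(C_d(Y^D\cup Y)=Y^D_d\) for each doctor \(d\). Individual rationality gives \(C_d(Y)=Y_d\). Since \(Y_d\subseteq Y_d\cup Y^D_d\), the law of aggregate demand yields
\[
|Y_d|=|C_d(Y)|\leq |C_d(Y^D\cup Y)|=|Y^D_d|.
\]

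\textbf{Hospital side.} Proposition~\ref{prop:hospital-pessimality} gives \(C_h(Y^D\cup Y)=Y_h\). Individual rationality of \(Y^D\) gives \(C_h(Y^D)=Y^D_h\). The law of aggregate demand for \(C_h\) then yields
\[
|Y^D_h|=|C_h(Y^D)|\leq |C_h(Y^D\cup Y)|=|Y_h|.
\]

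\textbf{Counting.} Each contract has exactly one doctor and one hospital, so for any allocation \(Z\) we have \(\sum_{d\in D}|Z_d|=|Z|=\sum_{h\in H}|Z_h|\). Summing the two families of inequalities gives
\[
|Y|=\sum_d|Y_d|\leq\sum_d|Y^D_d|=|Y^D|=\sum_h|Y^D_h|\leq\sum_h|Y_h|=|Y|.
\]
Hence every inequality in the chain is an equality. Because each summand inequality points the same way, each individual inequality must also be an equality. This gives \(|Y_a|=|Y^D_a|\) for all \(a\in D\cup H\), and comparing two stable allocations \(Y,Y'\) through \(Y^D\) finishes the proof.

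\textbf{Where the difficulty lies.} The only nontrivial inputs are the two revealed-choice identities: doctor-optimality from Corollary~\ref{thm:equivalent-characterizations} and hospital pessimality from Proposition~\ref{prop:hospital-pessimality}. Both are already available. The main point needing care is that the law of aggregate demand is applied only to monotone menu inclusions, namely \(Y_d\subseteq(Y^D\cup Y)_d\) and \(Y^D_h\subseteq(Y^D\cup Y)_h\). These inclusions hold trivially, so I expect no real obstacle beyond correctly invoking the existence of \(Y^D\).
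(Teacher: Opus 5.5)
Your proposal is correct and follows the paper's proof essentially exactly: you combine the doctor-optimality identity and hospital pessimality with the law of aggregate demand to get opposite agent-level inequalities, then sum them to force every inequality to bind. Your explicit remark that two arbitrary stable allocations are compared through \(Y^D\) just spells out a step the paper leaves implicit.
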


\begin{proof}
Fix \(Y\in\mathcal S\). The defining doctor-optimality identity and the
law of aggregate demand give
\[
|Y_d|=|C_d(Y)|
\leq |C_d(Y^D\cup Y)|=|Y_d^D|
\qquad
\text{for every }d.
\]
Proposition~\ref{prop:hospital-pessimality} gives
\[
|Y_h^D|=|C_h(Y^D)|
\leq |C_h(Y^D\cup Y)|=|Y_h|
\qquad
\text{for every }h.
\]
Summing the first inequalities over doctors and the second over hospitals
shows \(|Y|\leq|Y^D|\leq|Y|\). Hence, all individual inequalities bind.
\end{proof}

Because the market is bilaterally unitary, these cardinalities count active
relationships rather than multiple contracts within the same relationship.
The corollary allows stable allocations to differ in partners and terms, but
not in the number of relationships held by any doctor or hospital.

\section{Conclusion}
\label{sec:conclusion}

This paper identifies bilateral unitarity as a sufficient boundary for
doctor-optimal stability under unilateral substitutability. Doctors and
hospitals may choose portfolios across several partners, but alternative
contracts within a given doctor--hospital relationship are mutually
exclusive. On this domain, every doctor-proposing cumulative-offer trajectory
terminates at the greatest stable allocation under the doctor Blair order.

The role of unitarity is highlighted by
\citet[Example~3]{Bando2026}. In their nonunitary market, unilateral
substitutability and an order-independent cumulative offer process coexist
with the failure of doctor-optimal stability. Our result shows that excluding
multiple simultaneous contracts within a bilateral relationship restores the
doctor-optimal conclusion without returning to unit demand. Together with the
maximal-domain necessity result of \citet{Kasuya2021}, this establishes a
sharp hospital-side boundary: doctor demand can be expanded to substitutable
portfolios, whereas unilateral substitutability cannot generally be weakened
within the unitary domain.

The same allocation is the greatest element of the weakly
hospital-quasi-stable lattice. This characterization replaces contract
retention with relationship retention and recovers doctor-optimal stability
without requiring the stable set itself to form a lattice. The allocation is
also hospital-pessimal in the revealed-choice sense. Under the law of
aggregate demand, every agent maintains the same number of bilateral
relationships at all stable allocations.

\appendix

\section{Unilateral substitutability under unitarity}
\label{app:US-SAD}

The equivalence in Proposition~\ref{prop:US-SAD} is due to
\citet{Bando2026}. We reproduce a proof for completeness and to express the
argument in the notation used here.

\begin{proof}[Proof of Proposition~\ref{prop:US-SAD}]
Under hospital unitarity, substitutability across doctors is equivalent to
\begin{equation}
\label{eq:SAD-appendix}
d(C_h(Z)\cap Y)
\subseteq
d(C_h(Y))
\qquad
\text{for all }Y\subseteq Z\subseteq X_h.
\end{equation}

Suppose first that \(C_h\) is unilaterally substitutable. Let
\(Y\subseteq Z\subseteq X_h\). Suppose, contrary to
\eqref{eq:SAD-appendix}, that there exists
\[
z\in C_h(Z)\cap Y
\]
such that
\[
d(z)\notin d(C_h(Y)).
\]
Set
\[
Y^0:=C_h(Y).
\]
Since
\[
C_h(Y)\subseteq Y^0\cup\{z\}\subseteq Y,
\]
IRC implies
\[
C_h(Y^0\cup\{z\})=Y^0.
\]
In particular,
\[
z\notin C_h(Y^0\cup\{z\}).
\]

Enumerate
\[
\{x\in Z:d(x)\neq d(z)\}\setminus Y^0
=
\{x_1,\ldots,x_r\},
\]
and define
\[
Y^i:=Y^{i-1}\cup\{x_i\},
\qquad
i=1,\ldots,r.
\]
Because \(d(z)\notin d(Y^0)\),
\[
Y^r=\{x\in Z:d(x)\neq d(z)\}.
\]

Hospital unitarity and \(z\in C_h(Z)\) imply that \(z\) is the only
contract involving \(d(z)\) in \(C_h(Z)\). Hence,
\[
C_h(Z)\subseteq Y^r\cup\{z\}\subseteq Z.
\]
By IRC,
\[
C_h(Y^r\cup\{z\})=C_h(Z),
\]
so
\[
z\in C_h(Y^r\cup\{z\}).
\]

On the other hand, unilateral substitutability implies, by contraposition,
that
\[
z\notin C_h(Y^{i-1}\cup\{z\})
\quad\Longrightarrow\quad
z\notin C_h(Y^i\cup\{z\})
\]
for every \(i=1,\ldots,r\), because
\(d(z)\notin d(Y^{i-1})\). Starting from
\(z\notin C_h(Y^0\cup\{z\})\), induction gives
\[
z\notin C_h(Y^r\cup\{z\}),
\]
a contradiction. Therefore, \eqref{eq:SAD-appendix} holds, and \(C_h\)
satisfies substitutability across doctors.

Conversely, suppose that \(C_h\) satisfies substitutability across doctors.
Let \(x,z\in X_h\) and \(Y\subseteq X_h\) satisfy
\[
d(z)\notin d(Y)
\qquad\text{and}\qquad
z\in C_h(Y\cup\{x,z\}).
\]
Applying substitutability across doctors to
\[
Y\cup\{z\}\subseteq Y\cup\{x,z\}
\]
gives
\[
d(z)
\in
d\bigl(C_h(Y\cup\{x,z\})\cap(Y\cup\{z\})\bigr)
\subseteq
d(C_h(Y\cup\{z\})).
\]
Since \(z\) is the only contract involving doctor \(d(z)\) in
\(Y\cup\{z\}\), it follows that
\[
z\in C_h(Y\cup\{z\}).
\]
Thus, \(C_h\) is unilaterally substitutable.
\end{proof}

\section{A characterization of stability}
\label{app:stability-characterization}

\begin{proof}[Proof of Proposition~\ref{prop:stability-characterization}]
Suppose first that \(Y\) is stable. Individual rationality implies
\[
Y\subseteq\Gamma(Y).
\]
Define
\[
Z:=C_H(\Gamma(Y))\setminus Y.
\]
Fix \(h\in H\). Since
\[
C_h(\Gamma(Y))
\subseteq
Y_h\cup Z_h
\subseteq
\Gamma(Y)_h,
\]
IRC implies
\begin{equation}
\label{eq:appendix-hospital-block}
C_h(Y\cup Z_h)
=
C_h(\Gamma(Y)).
\end{equation}
Consequently,
\[
Z_h\subseteq C_h(Y\cup Z_h).
\]

We show that every contract in \(Z_h\) is also chosen by its doctor from
\(Y\cup Z_h\). Fix \(z\in Z_h\). Since
\[
Z_h\subseteq C_h(\Gamma(Y)),
\]
hospital unitarity implies
\[
Z_h\cap X_{d(z)}=\{z\}.
\]
Moreover, \(z\in Z_h\subseteq\Gamma(Y)\), so
\[
z\in C_{d(z)}(Y\cup\{z\}).
\]
Because \(z\) is the only contract in \(Z_h\) involving \(d(z)\),
the menus \(Y\cup\{z\}\) and \(Y\cup Z_h\) induce the same menu for
doctor \(d(z)\). Hence,
\[
z
\in
C_{d(z)}(Y\cup Z_h).
\]
It follows that
\[
Z_h
\subseteq
C_D(Y\cup Z_h)\cap C_H(Y\cup Z_h).
\]

If \(Z_h\neq\emptyset\), then \(Z_h\subseteq X\setminus Y\) blocks \(Y\),
contradicting stability. Therefore, \(Z_h=\emptyset\) for every \(h\), and
hence
\[
C_H(\Gamma(Y))\subseteq Y.
\]
Together with
\[
C_H(\Gamma(Y))\subseteq Y\subseteq\Gamma(Y)
\]
and \(C_H(Y)=Y\), IRC gives
\[
C_H(\Gamma(Y))=C_H(Y)=Y.
\]

Conversely, suppose that
\[
Y=C_H(\Gamma(Y)).
\]
Since a choice is contained in its menu,
\[
Y\subseteq\Gamma(Y).
\]
Therefore,
\[
C_H(\Gamma(Y))
\subseteq
Y
\subseteq
\Gamma(Y),
\]
and IRC gives
\[
C_H(Y)=C_H(\Gamma(Y))=Y.
\]

For every \(y\in Y\), the inclusion \(Y\subseteq\Gamma(Y)\) implies
\[
y\in C_{d(y)}(Y\cup\{y\})
=
C_{d(y)}(Y).
\]
Thus,
\[
Y\subseteq C_D(Y).
\]
The reverse inclusion is automatic, so
\[
C_D(Y)=Y.
\]
Hence, \(Y\) is individually rational.

Suppose, contrary to stability, that a nonempty set
\(Z\subseteq X\setminus Y\) blocks \(Y\). Then
\[
Z\subseteq C_D(Y\cup Z).
\]
For every \(z\in Z\), doctor substitutability applied to
\[
Y\cup\{z\}\subseteq Y\cup Z
\]
implies
\[
z\in C_D(Y\cup\{z\}).
\]
Therefore,
\[
Z\subseteq\Gamma(Y).
\]
It follows that
\[
C_H(\Gamma(Y))
=
Y
\subseteq
Y\cup Z
\subseteq
\Gamma(Y).
\]
By IRC,
\[
C_H(Y\cup Z)=C_H(\Gamma(Y))=Y.
\]
This contradicts the blocking condition
\[
Z\subseteq C_H(Y\cup Z),
\]
because \(Z\cap Y=\emptyset\) and \(Z\neq\emptyset\). Hence, no such
blocking set exists, and \(Y\) is stable.
\end{proof}

\section{Technical results for cumulative offers}
\label{app:COP-proofs}

\begin{proof}[Proof of Lemma~\ref{lem:rejection-irrevocability}]
By Proposition~\ref{prop:US-SAD}, every hospital satisfies
substitutability across doctors.

Suppose, contrary to the conclusion, that a contract is rejected by its
hospital and subsequently chosen again. Call such a contract
\emph{revived}. Choose a revived contract \(x_i\), where \(i\) is its
proposal date, and let
\[
h:=h(x_i).
\]
Let \(j\geq i\) be the first step at which \(h\) rejects \(x_i\). Thus,
\[
x_i\in C_h(O^s)
\qquad
\text{for }s=i,\ldots,j-1,
\]
while
\[
x_i\notin C_h(O^j).
\]
Since \(x_i\) is revived, there exists \(k>j\) such that
\[
x_i\in C_h(O^k).
\]

Because \(O^j\subseteq O^k\) and
\[
x_i\in C_h(O^k)\cap O^j,
\]
substitutability across doctors implies that \(C_h(O^j)\) contains a
contract \(x_\ell\) satisfying
\[
d(x_\ell)=d(x_i).
\]
Since \(x_i\notin C_h(O^j)\), we have \(x_\ell\neq x_i\). Moreover,
\(x_\ell\in O^j\), so
\[
\ell\leq j.
\]

We first show that
\[
\ell<i.
\]
Suppose instead that \(i<\ell\). Since \(j\) is the first rejection date
of \(x_i\) and \(\ell\leq j\), contract \(x_i\) has not been rejected
before Step \(\ell\). Hence,
\[
x_i\in X\setminus R^{\ell-1}.
\]
When proposed at Step \(i\),
\[
x_i\in C_{d(x_i)}(X\setminus R^{i-1}).
\]
Since
\[
X\setminus R^{\ell-1}
\subseteq
X\setminus R^{i-1},
\]
doctor substitutability implies
\[
x_i\in C_{d(x_i)}(X\setminus R^{\ell-1}).
\]
At Step \(\ell\), contract \(x_\ell\) is also chosen by the same doctor
from \(X\setminus R^{\ell-1}\). Thus,
\[
x_i,x_\ell
\in
C_{d(x_i)}(X\setminus R^{\ell-1}).
\]
The two contracts are distinct and involve the same doctor--hospital pair,
contradicting doctor unitarity. The case \(\ell=i\) is impossible because
\(x_\ell\neq x_i\). Therefore,
\[
\ell<i.
\]

We next show that \(x_\ell\) is itself revived. Suppose that
\[
x_\ell\notin R^{i-1}.
\]
Then \(x_\ell\in X\setminus R^{i-1}\). Since \(x_\ell\) was chosen by its
doctor when proposed at Step \(\ell<i\), and
\[
X\setminus R^{i-1}
\subseteq
X\setminus R^{\ell-1},
\]
doctor substitutability implies
\[
x_\ell\in C_{d(x_i)}(X\setminus R^{i-1}).
\]
At Step \(i\), the same doctor also chooses \(x_i\) from this menu.
Therefore,
\[
x_i,x_\ell
\in
C_{d(x_i)}(X\setminus R^{i-1}),
\]
again contradicting doctor unitarity. Hence,
\[
x_\ell\in R^{i-1}.
\]

Thus, \(x_\ell\) was rejected before Step \(i\). But
\[
x_\ell\in C_h(O^j),
\qquad
j\geq i,
\]
so \(x_\ell\) is subsequently chosen again by hospital \(h\). Therefore,
\(x_\ell\) is revived.

Starting from a revived contract with proposal date \(i\), we have
constructed another revived contract with strictly earlier proposal date
\(\ell<i\). Repeating the argument produces an infinite strictly decreasing
sequence of positive integers,
\[
i_1>i_2>i_3>\cdots,
\]
which is impossible. Hence, no contract can be revived.

Since \(O^{t-1}\subseteq O^t\), a contract in
\(R_H(O^{t-1})\setminus R_H(O^t)\) would be a rejected contract chosen
again at Step \(t\). As no contract is revived,
\[
R_H(O^{t-1})\subseteq R_H(O^t)
\qquad
\text{for every }t=1,\ldots,t^*.
\]
\end{proof}

\begin{proof}[Proof of Lemma~\ref{cor:cumulative-rejections}]
The conclusion holds at \(t=0\). If it holds at \(t-1\), then
Lemma~\ref{lem:rejection-irrevocability} gives
\[
R^t
=
R^{t-1}\cup R_H(O^t)
=
R_H(O^{t-1})\cup R_H(O^t)
=
R_H(O^t).
\]
\end{proof}

\begin{proof}[Proof of Lemma~\ref{lem:COP-terminal-identities}]
By the termination condition,
\[
C_D(X\setminus R^{t^*})\subseteq O^{t^*}.
\]
Since \(C_D(X\setminus R^{t^*})\) contains no rejected contract,
Corollary~\ref{cor:cumulative-rejections} gives
\[
C_D(X\setminus R^{t^*})
\subseteq
O^{t^*}\setminus R^{t^*}
=
C_H(O^{t^*})
=
Y^{t^*}
\subseteq
X\setminus R^{t^*}.
\]
By IRC,
\begin{equation}
\label{eq:COP-terminal-doctor-choice}
C_D(Y^{t^*})
=
C_D(X\setminus R^{t^*}).
\end{equation}

Let \(x_i\in Y^{t^*}\). Since \(x_i\) is not rejected by the terminal
step,
\[
Y^{t^*}\subseteq X\setminus R^{i-1}.
\]
Moreover,
\[
x_i\in C_{d(x_i)}(X\setminus R^{i-1})
\]
when it is proposed. Doctor substitutability therefore implies
\[
x_i\in C_{d(x_i)}(Y^{t^*}).
\]
Hence,
\[
Y^{t^*}\subseteq C_D(Y^{t^*}).
\]
The reverse inclusion follows from the definition of a choice function, so
\[
Y^{t^*}=C_D(Y^{t^*}).
\]
Together with \eqref{eq:COP-terminal-doctor-choice}, this proves
\[
Y^{t^*}=C_D(X\setminus R^{t^*}).
\]

We now prove that \(O^{t^*}=\Gamma(Y^{t^*})\). Let
\(x_i\in O^{t^*}\). Since
\[
Y^{t^*}\cup\{x_i\}
\subseteq
X\setminus R^{i-1}
\]
and \(x_i\) is chosen by its doctor from \(X\setminus R^{i-1}\),
substitutability gives
\[
x_i\in C_D(Y^{t^*}\cup\{x_i\}).
\]
Thus,
\[
O^{t^*}\subseteq\Gamma(Y^{t^*}).
\]

Conversely, suppose that
\[
z\in\Gamma(Y^{t^*})\setminus O^{t^*}.
\]
Every rejected contract has previously been proposed, so
\(R^{t^*}\subseteq O^{t^*}\). Hence,
\[
z\in X\setminus R^{t^*}.
\]
Using path independence and the first terminal identity,
\begin{align*}
C_D(Y^{t^*}\cup\{z\})
&=
C_D\bigl(C_D(X\setminus R^{t^*})\cup\{z\}\bigr)\\
&=
C_D\bigl((X\setminus R^{t^*})\cup\{z\}\bigr)\\
&=
C_D(X\setminus R^{t^*})\\
&=
Y^{t^*},
\end{align*}
where the third equality follows because
\(z\in X\setminus R^{t^*}\). This contradicts
\(z\in\Gamma(Y^{t^*})\). Therefore,
\[
O^{t^*}=\Gamma(Y^{t^*}).
\]
\end{proof}

\begin{proof}[Proof of Lemma~\ref{lem:COP-stable-invariant}]
We proceed by induction on \(t\). The conclusion is immediate at \(t=0\).

Suppose that
\[
O^{t-1}\subseteq\Gamma(Y)
\qquad\text{and}\qquad
R^{t-1}\cap Y=\emptyset.
\]
Then
\[
Y\subseteq X\setminus R^{t-1}.
\]
Since
\[
x_t\in C_{d(x_t)}(X\setminus R^{t-1})
\]
and
\[
Y\cup\{x_t\}\subseteq X\setminus R^{t-1},
\]
doctor substitutability implies
\[
x_t\in C_D(Y\cup\{x_t\}).
\]
Thus,
\[
O^t=O^{t-1}\cup\{x_t\}\subseteq\Gamma(Y).
\]

It remains to prove that \(R^t\cap Y=\emptyset\). By
Corollary~\ref{cor:cumulative-rejections}, it is enough to show
\[
O^t\cap Y\subseteq C_H(O^t).
\]
Let \(x_i\in O^t\cap Y\), and let \(h:=h(x_i)\). Stability and
Proposition~\ref{prop:stability-characterization} imply
\[
x_i\in Y_h=C_h(\Gamma(Y)).
\]
Since \(O^t\subseteq\Gamma(Y)\), substitutability across doctors implies that
\(C_h(O^t)\) contains a contract \(x_j\) with
\[
d(x_j)=d(x_i).
\]

Because \(x_j\in C_h(O^t)\), it is not in \(R^t\), and hence
\[
x_j\in X\setminus R^{i-1}.
\]
Contract \(x_i\) was chosen by its doctor when proposed at Step \(i\).
Doctor substitutability therefore gives
\[
x_i\in C_{d(x_i)}(\{x_i,x_j\}).
\]
On the other hand, \(x_j\in O^t\subseteq\Gamma(Y)\) and \(x_i\in Y\), so
\[
x_j\in C_{d(x_i)}(Y\cup\{x_j\}).
\]
Substitutability again implies
\[
x_j\in C_{d(x_i)}(\{x_i,x_j\}).
\]
Since the two contracts involve the same doctor--hospital pair, doctor
unitarity implies \(x_i=x_j\). Therefore,
\[
x_i\in C_h(O^t).
\]
It follows that \(R^t\cap Y=\emptyset\), completing the induction.
\end{proof}

\section{The weakly hospital-quasi-stable lattice}
\label{app:WHQS-proofs}

\begin{lemma}
\label{lem:Gamma-join}
Let \(\{Y^i\}_{i\in I}\subseteq\mathcal Q^w\) be nonempty and set
\[
Z:=C_D\left(\bigcup_{i\in I}Y^i\right).
\]
Then
\[
\Gamma(Z)\subseteq\bigcap_{i\in I}\Gamma(Y^i).
\]
\end{lemma}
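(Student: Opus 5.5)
Fix \(i\in I\) and \(x\in\Gamma(Z)\); the task is to show \(x\in C_D(Y^i\cup\{x\})\). Write \(U:=\bigcup_{k\in I}Y^k\), so \(Z=C_D(U)\). I will use only doctor path independence and substitutability; the weak hospital-quasi-stability of the \(Y^k\) is not needed beyond \(C_D(Y^k)=Y^k\).

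The first step is to lift membership in \(\Gamma(Z)\) to the full union. By path independence,
\[
C_D(U\cup\{x\})=C_D\bigl(C_D(U)\cup\{x\}\bigr)=C_D(Z\cup\{x\}),
\]
so \(x\in C_D(U\cup\{x\})\). The second step applies doctor substitutability to the inclusion \(Y^i\cup\{x\}\subseteq U\cup\{x\}\). This gives
\[
x\in C_D(U\cup\{x\})\cap\bigl(Y^i\cup\{x\}\bigr)\subseteq C_D(Y^i\cup\{x\}),
\]
that is, \(x\in\Gamma(Y^i)\). Since \(i\) was arbitrary, \(\Gamma(Z)\subseteq\bigcap_{i}\Gamma(Y^i)\).

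One should check that these properties may be applied doctor by doctor. Both \(C_D\) and \(\Gamma\) decompose across doctors, and the menus in question are restricted to \(X_{d(x)}\), so the identities hold for \(C_{d(x)}\), where path independence and substitutability are available under Assumption~\ref{ass:main}. The argument contains no real obstacle. The only point needing care is the direction of the path-independence identity, namely that \(C_D\) of the union together with \(x\) equals \(C_D\) of \(Z\) together with \(x\), which holds because \(Z\) is \(C_D(U)\) and \(x\) is adjoined on both sides. If \(x\) already lies in \(U\), the argument is unchanged.
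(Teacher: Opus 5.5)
Your proposal is correct and follows the same route as the paper: path independence lifts \(x\in C_D(Z\cup\{x\})\) to \(x\in C_D(U\cup\{x\})\), and doctor substitutability applied to \(Y^i\cup\{x\}\subseteq U\cup\{x\}\) then gives \(x\in\Gamma(Y^i)\). You are also right that weak hospital-quasi-stability of the \(Y^i\) is not used; the paper's proof does not use it either.
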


\begin{proof}
Let
\[
U:=\bigcup_{i\in I}Y^i
\]
and take \(x\in\Gamma(Z)\). By path independence,
\[
x\in C_D(Z\cup\{x\})
=
C_D(C_D(U)\cup\{x\})
=
C_D(U\cup\{x\}).
\]
For every \(i\in I\),
\[
Y^i\cup\{x\}\subseteq U\cup\{x\}.
\]
Doctor substitutability therefore implies
\[
x\in C_D(Y^i\cup\{x\}),
\]
and hence \(x\in\Gamma(Y^i)\).
\end{proof}

\begin{proof}[Proof of Theorem~\ref{thm:weak-HQS-lattice}]
Let
\[
U:=\bigcup_{i\in I}Y^i
\qquad\text{and}\qquad
Z:=C_D(U).
\]
By idempotence of choice,
\[
C_D(Z)=Z.
\]

We show that \(Z\) is weakly hospital-quasi-stable. Fix \(h\in H\) and
\(d\in d(Z_h)\). Choose \(x\in Z_h\cap X_d\). Since \(Z\subseteq U\), there
exists \(j\in I\) such that \(x\in Y^j_h\). Because \(Y^j\in\mathcal Q^w\),
there exists
\[
x'\in C_h(\Gamma(Y^j))
\]
such that
\[
d(x')=d
\qquad\text{and}\qquad
h(x')=h.
\]

We claim that \(x'\in\Gamma(Z)\). Suppose otherwise. Then
\[
x'\notin C_d(Z\cup\{x'\}),
\]
so
\[
C_d(Z\cup\{x'\})\subseteq Z_d.
\]
Since \(C_d(Z)=Z_d\), IRC implies
\[
C_d(Z\cup\{x'\})=Z_d.
\]
Because \(x\in Z\subseteq\Gamma(Z)\), we also have \(x'\neq x\).
Thus,
\[
x\in C_d(Z\cup\{x'\}).
\]
By path independence,
\[
C_d(Z\cup\{x'\})
=
C_d(U\cup\{x'\}),
\]
and doctor substitutability gives
\[
x\in C_d(Y^j\cup\{x'\}).
\]
On the other hand, \(x'\in\Gamma(Y^j)\), so
\[
x'\in C_d(Y^j\cup\{x'\}).
\]
The two distinct contracts \(x\) and \(x'\) involve the same
doctor--hospital pair, contradicting doctor unitarity. Hence,
\(x'\in\Gamma(Z)\).

By Lemma~\ref{lem:Gamma-join},
\[
\Gamma(Z)\subseteq\Gamma(Y^j).
\]
Since
\[
x'\in C_h(\Gamma(Y^j))\cap\Gamma(Z),
\]
substitutability across doctors implies
\[
d\in d(C_h(\Gamma(Z))).
\]
Therefore,
\[
d(Z_h)\subseteq d(C_h(\Gamma(Z)))
\]
for every \(h\in H\), and \(Z\in\mathcal Q^w\).

We next verify that \(Z\) is the join. Since
\[
Z=C_D(U)\subseteq Z\cup Y^i\subseteq U,
\]
IRC gives
\[
C_D(Z\cup Y^i)=Z
\qquad
\text{for every }i\in I.
\]
Thus, \(Z\succeq_D^B Y^i\) for every \(i\).

Let \(\bar Y\in\mathcal Q^w\) be any other upper bound of
\(\{Y^i\}_{i\in I}\). Since the market is finite, only finitely many distinct
allocations occur in the collection. Repeated applications of path
independence and
\[
C_D(\bar Y\cup Y^i)=\bar Y
\]
give
\[
C_D(\bar Y\cup U)=\bar Y.
\]
Using path independence once more,
\[
C_D(\bar Y\cup Z)
=
C_D(\bar Y\cup C_D(U))
=
C_D(\bar Y\cup U)
=
\bar Y.
\]
Hence, \(\bar Y\succeq_D^B Z\), so \(Z\) is the least upper bound.

The empty allocation is the least element of \(\mathcal Q^w\). Since every
nonempty finite collection has a join, the meet of any collection is the join
of its common lower bounds. Therefore,
\((\mathcal Q^w,\succeq_D^B)\) is a finite lattice.
\end{proof}

\begin{lemma}
\label{lem:replacement-identity}
Let \(Y\in\mathcal Q^w\), set \(S:=C_H(\Gamma(Y))\), and fix a doctor
\(d\). Then, for every \(A\subseteq X_d\),
\[
C_d(S_d\cup A)=C_d(Y_d\cup S_d\cup A).
\]
\end{lemma}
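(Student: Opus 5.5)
The plan is to reduce the identity to a single inclusion and then use path independence. Specifically, I will show
\[
C_d(Y_d\cup S_d)\subseteq S_d .
\]
Given this, IRC applied to \(C_d(Y_d\cup S_d)\subseteq S_d\subseteq Y_d\cup S_d\) yields \(C_d(Y_d\cup S_d)=C_d(S_d)\). Path independence of doctor choice then gives, for every \(A\subseteq X_d\),
\[
C_d(Y_d\cup S_d\cup A)=C_d\bigl(C_d(Y_d\cup S_d)\cup A\bigr)=C_d\bigl(C_d(S_d)\cup A\bigr)=C_d(S_d\cup A),
\]
which is the claimed identity.

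To prove the inclusion, it suffices to show that no contract \(y\in Y_d\setminus S_d\) belongs to \(C_d(Y_d\cup S_d)\). Fix such a \(y\) and set \(h:=h(y)\). Because \(Y\in\mathcal Q^w\) and \(d\in d(Y_h)\), Definition~\ref{def:weak-HQS} gives a contract \(s\in C_h(\Gamma(Y))=S_h\) with \(d(s)=d\) and \(h(s)=h\). Since \(y\notin S\), we have \(s\neq y\), so \(s\) and \(y\) are distinct contracts for the same doctor--hospital pair.

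The key step, where unitarity does the work, is a contradiction argument. Suppose \(y\in C_d(Y_d\cup S_d)\). The menu \(Y_d\cup\{s\}\) is contained in \(Y_d\cup S_d\) and contains \(y\), so doctor substitutability gives \(y\in C_d(Y_d\cup\{s\})\). On the other hand, \(s\in S\subseteq\Gamma(Y)\), which means \(s\in C_d(Y\cup\{s\})=C_d(Y_d\cup\{s\})\). Thus \(C_d(Y_d\cup\{s\})\) contains both \(y\) and \(s\), two distinct contracts for the same pair, contradicting doctor unitarity. Hence \(y\notin C_d(Y_d\cup S_d)\), and the inclusion follows.

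I do not expect a serious obstacle. The only point needing care is to use weak hospital-quasi-stability to obtain a same-pair replacement \(s\) for each \(y\in Y_d\setminus S_d\), not merely some contract in \(S\) involving \(d\). This is exactly what the condition \(d(Y_h)\subseteq d(C_h(\Gamma(Y)))\), applied hospital by hospital, provides.
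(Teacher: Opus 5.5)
Your proof is correct and uses the same ingredients as the paper's. Weak hospital-quasi-stability gives each contract in \(Y_d\) a same-pair counterpart in \(S_d\subseteq\Gamma(Y)\), and doctor substitutability plus unitarity then force that counterpart to prevail; the paper packages this blockwise, showing \(C_d\bigl((Y_{h(s)}\cap X_d)\cup\{s\}\bigr)=\{s\}\) for each \(s\in S_d\) and gluing the blocks by repeated path independence, whereas you prove \(C_d(Y_d\cup S_d)\subseteq S_d\) by contradiction and then need only IRC and one path-independence step.
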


\begin{proof}
Fix \(y\in S_d\). Since \(y\in\Gamma(Y)\),
\[
y\in C_d(Y_d\cup\{y\}).
\]
By substitutability,
\[
y\in
C_d\bigl((Y_{h(y)}\cap X_d)\cup\{y\}\bigr).
\]
All contracts in this menu involve the same doctor--hospital pair. Doctor
unitarity therefore implies
\[
C_d\bigl((Y_{h(y)}\cap X_d)\cup\{y\}\bigr)=\{y\}.
\]

Since \(Y\) is weakly hospital-quasi-stable, every hospital represented in
\(Y_d\) is also represented in \(S_d\). Consequently,
\[
Y_d\cup S_d
=
\bigcup_{y\in S_d}
\left((Y_{h(y)}\cap X_d)\cup\{y\}\right).
\]
Repeated applications of path independence yield
\begin{align*}
C_d(Y_d\cup S_d\cup A)
&=
C_d\left(
\bigcup_{y\in S_d}
C_d\bigl((Y_{h(y)}\cap X_d)\cup\{y\}\bigr)
\cup A
\right)\\
&=
C_d(S_d\cup A).
\end{align*}
\end{proof}

\begin{proof}[Proof of Proposition~\ref{prop:T-improvement}]
Fix \(Y\in\mathcal Q^w\), and write
\[
S:=C_H(\Gamma(Y))
\qquad\text{and}\qquad
Y':=T(Y)=C_D(S).
\]
By idempotence,
\[
C_D(Y')=Y'.
\]

We first show that
\[
\Gamma(Y')\subseteq\Gamma(Y).
\]
Take \(x\in\Gamma(Y')\), and let \(d=d(x)\). Path independence gives
\[
x\in C_d(Y'_d\cup\{x\})
=
C_d(S_d\cup\{x\}).
\]
By Lemma~\ref{lem:replacement-identity},
\[
x\in C_d(Y_d\cup S_d\cup\{x\}).
\]
Doctor substitutability then implies
\[
x\in C_d(Y_d\cup\{x\}),
\]
so \(x\in\Gamma(Y)\).

We next verify weak hospital-quasi-stability of \(Y'\). Fix \(h\in H\) and
\(d\in d(Y'_h)\). Choose \(x\in Y'_h\cap X_d\). Since \(Y'\subseteq S\),
\[
x\in C_h(\Gamma(Y)).
\]
Moreover, \(Y'\subseteq\Gamma(Y')\), so \(x\in\Gamma(Y')\). Since
\[
\Gamma(Y')\subseteq\Gamma(Y),
\]
substitutability across doctors implies
\[
d\in d(C_h(\Gamma(Y'))).
\]
Thus,
\[
d(Y'_h)\subseteq d(C_h(\Gamma(Y')))
\]
for every \(h\), and \(Y'\in\mathcal Q^w\).

Finally, for each doctor \(d\), path independence and
Lemma~\ref{lem:replacement-identity} imply
\begin{align*}
C_d(Y_d\cup Y'_d)
&=
C_d(Y_d\cup C_d(S_d))\\
&=
C_d(Y_d\cup S_d)\\
&=
C_d(S_d)\\
&=
Y'_d.
\end{align*}
Hence,
\[
C_D(Y\cup Y')=Y',
\]
and therefore \(Y'\succeq_D^B Y\).
\end{proof}

\bibliographystyle{econ}
\bibliography{dos_yang}

\end{document}